\documentclass[letterpaper]{article} % DO NOT CHANGE THIS
\usepackage[preprint]{aaai2027}  % DO NOT CHANGE THIS
\usepackage[hyphens]{url}  % DO NOT CHANGE THIS
\usepackage{graphicx} % DO NOT CHANGE THIS
\usepackage{natbib}  % DO NOT CHANGE THIS AND DO NOT ADD ANY OPTIONS TO IT
\usepackage{caption} % DO NOT CHANGE THIS AND DO NOT ADD ANY OPTIONS TO IT
\usepackage{algorithm}
\usepackage{algorithmic}

\usepackage{newfloat}
\usepackage{listings}
\DeclareCaptionStyle{ruled}{labelfont=normalfont,labelsep=colon,strut=off} % DO NOT CHANGE THIS
\floatstyle{ruled}
\newfloat{listing}{tb}{lst}{}
\floatname{listing}{Listing}

\usepackage{booktabs}
\usepackage{comment}
\usepackage{amsmath,amsthm,amssymb}
\newcommand{\red}[1]{\textcolor{red}{#1}}
\newtheorem{assumption}{Assumption}
\newtheorem{proposition}{Proposition}
\newtheorem*{remark}{Remark}
\usepackage{tabularx}
\usepackage{subcaption}
\title{Learning Long-Term Educational Investment Policies under Residential Sorting}
\author{
    Hpnglei Guo\textsuperscript{\rm 1},
    Shuo Chen\textsuperscript{\rm 1},
    Yuhan Zhao\corresponding \textsuperscript{\rm 1}
}
\affiliations{
    \textsuperscript{\rm 1}Zhe Jiang University\\
    \textsuperscript{\rm 2}BIGAI
}

\title{Learning Long-Term Educational Investment Policies under Residential Sorting}
\author {
    Honglei Guo \textsuperscript{\rm 1,\rm 2},
    Shuo Chen \textsuperscript{\rm 2},
    Mingjie Bi\textsuperscript{\rm 2},
    Zeyang Sun\textsuperscript{\rm 2},
    Xiaoxi Wang\textsuperscript{\rm 2},
    Yuhan Zhao\textsuperscript{\rm 2}%\corresponding
}
\affiliations {
    \textsuperscript{\rm 1}  College of Artificial Intelligence,
  Zhejiang University\\
    \textsuperscript{\rm 2}  State Key Laboratory of General Artificial Intelligence, BIGAI\\
   zhaoyuhan@bigai.ai%, secondAuthor@affilation2.com, thirdAuthor@affiliation1.com
}

\begin{document}

\maketitle

\begin{abstract}
Allocating public-school investment effectively and fairly is difficult when school access depends on residence. School improvements can raise nearby housing demand and prices, reshape enrollment, and potentially limit access for lower-income households. These effects evolve as residential sorting changes school composition, quality, and future investment needs. Existing approaches often study school funding, household choice, and housing markets separately, while static models can miss their interconnected, long-term effects. We address this gap with a dynamic multi-agent framework that links government investment, household sorting, housing prices, population turnover, enrollment, and evolving school quality. A government planner uses reinforcement learning (RL) to identify multiyear allocation policies that account for household responses while balancing aggregate educational access and equity. In simulations, our RL-based policy attains the highest access level (0.4780) and second-lowest access Gini coefficient (0.0164) among representative baselines, demonstrating a favorable effectiveness–equity balance. The results also indicate reduced socioeconomic stratification in educational access. By making education–housing feedback explicit, our framework supports long-term analysis of how school investment shapes educational opportunity over time.
\end{abstract}

% Uncomment the following to link to your code, datasets, an extended version or similar.
% You must keep this block between (not within) the abstract and the main body of the paper.
% Make sure that you do not de-anonymize yourself with these links.
% \begin{links}
%     \link{Code}{https://aaai.org/example/code}
%     \link{Datasets}{https://aaai.org/example/datasets}
%     \link{Extended version}{https://aaai.org/example/extended-version}
% \end{links}

\section{Introduction} \label{sec:intro}

\begin{figure*}
    \centering
    \includegraphics[width=0.9\linewidth]{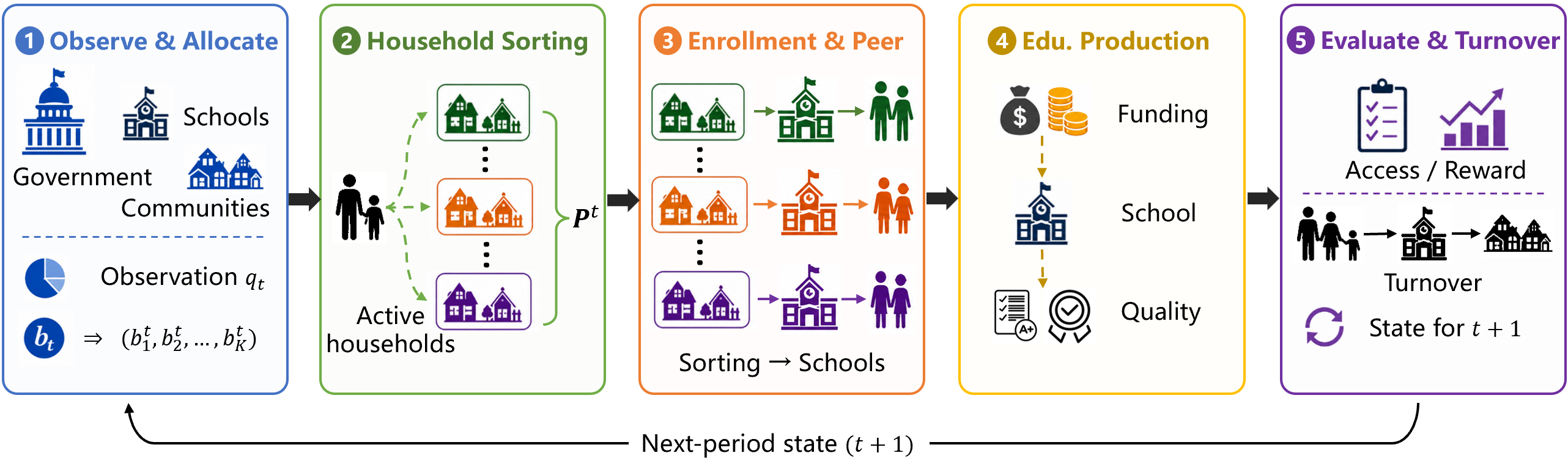}
    \caption{Overview of the multi-agent resource allocation framework. Government resource allocation affects school access indirectly through household sorting, housing-market adjustment, enrollment, and school-quality dynamics. These interactions jointly determine the next-period state, and the government uses reinforcement learning to learn a sequential allocation policy.
    %Illustration of the multi-agent framework for educational resource allocation. Different entities take five interactive blocks each periods to demonstrate the impact of educational resources. The government leverages RL to seek a cross-year optimal allocation strategies.
    }
    \label{fig:intro}
\end{figure*}

Access to high-quality education is central for long-term social opportunity. However, in many public-school-dominant education systems, access to educational quality is highly impacted by residential location and school-district assignment. When high-quality schools are concentrated in particular communities, households with sufficient resources may compete for housing in these areas, while less advantaged households may face limited access to the same educational opportunities \cite{chetty2016effects,biasi2023school,owens2018income}. 
This makes educational resource allocation an important but difficult social planning problem. Governments must allocate limited resources across schools or districts while balancing multiple objectives, such as %improving average educational quality, 
reducing inequality in access and maintaining efficient resource use. 
In this work, we operationalize these educational resources as government investment at the school level, which represents policy inputs that can improve school quality, such as teacher development, facility improvement, and other forms of school-level support.

The difficulty of educational investment is that its effect does not stop at the school receiving the resource. It propagates through a dynamic social loop. Government investment changes school quality; school quality affects household residential and school choices; household choices change housing prices, student enrollment, and the demographic composition of schools; these changes then reshape inequality, school demand, and future allocation needs. 
%
%This complicated social loops contains four features that need to be addressed properly. First, educational resource is coupled with the housing market when residential location determines access to schools. Second, households are heterogeneous in income and preferences, which can lead to various school choices. Third, the effects of investment are not immediate and persist over time because improvements in education and student outcomes emerge gradually. Lastly, households, schools, housing markets, and governments interact together, so the outcome of an allocation policy depends on the responses of multiple actors rather than on the government's decision alone.
Planning within this loop is challenging for several reasons. When residential location determines school access, educational investment becomes coupled with the housing market, so improvements in school quality may also change local housing demand and prices. Household responses are also heterogeneous because families differ in income and preferences, leading the same investment to affect access differently across groups. Moreover, the effects of investment unfold gradually and can persist across multiple periods. These mechanisms interact simultaneously, meaning that the outcome of an allocation decision depends not only on the government's action but also on the endogenous responses of households, schools, and housing markets.

Prior work has examined important parts of this system. For example, \citet{barseghyan2019peer} studied household behavior model with a focus on peer preferences in public school choice; \citet{abdulkadirouglu2003school} approached the school choice problem via mechanism design; \citet{black1999better} and \citet{caetano2019neighborhood} investigated the relationship between school quality and housing prices; while \citet{monarrez2023school} and \citet{owens2016income} focused on educational inequality and segregation.
These studies characterize important links in the social loop, but they typically examine them separately rather than as part of a unified dynamic resource-allocation problem. This leaves open how a government should allocate resources when school quality, residential sorting, housing prices, enrollment, and educational access co-evolve over time.

To address this gap, we propose a dynamic multi-agent framework for educational investment allocation. As illustrated in Figure~\ref{fig:intro}, the framework models households and government as interacting agents, while representing schools, the housing market, and demographic change as key environment components. 
Households differ in socioeconomic characteristics and preferences, choose residential locations under housing and educational constraints, and generate evolving enrollment patterns. Schools receive government investment and update their educational quality over time. The government observes the evolving system and allocates limited investment across schools while balancing multiple objectives.
Building on this framework, we formulate government allocation as a sequential decision-making problem and use reinforcement learning (RL) to learn allocation policies.
We use simulations to compare different allocation strategies and examine their effects on educational access, resource efficiency, residential and school segregation, and inequality over time. We further conduct sensitivity analyses to study how key factors, including household preferences and school educational performance, shape policy outcomes.
Our contributions are as follows:
\begin{itemize}
    \item We develop a multi-agent computational framework that captures the dynamic education--housing feedback loop, linking government investment, school quality, household sorting, 
    %housing prices, 
    enrollment, and educational access.
    \item We characterize household sorting through a convex program and develop efficient algorithms for computing the resulting equilibrium.
    \item We develop an RL-based allocation approach that enables a government planner to learn multi-objective investment policies under delayed effects and endogenous household responses.
    % \item \red{social analysis?}
    \item We conduct comprehensive simulations based on the framework and conduct social impact analysis to show that our RL-based policy achieves a favorable balance between efficiency and equity by reducing disparities in school quality while preserving socioeconomic matching. %This promotes more equitable access to schools while increasing per-student investment.
\end{itemize}

\section{Related Work} \label{sec:related_work}
% educational resource allocation related. ABM. In this work, we consideration interactions and impacts. More comprehensive.

% urban planning with multi-agent and RL.

\paragraph{School Access and Household Sorting.} School-choice research studies assignment mechanisms and how preferences for school effectiveness and peer composition shape enrollment \cite{abdulkadirouglu2003school,abdulkadiroglu2020parents,barseghyan2019peer}. A related structural literature models residential sorting as an equilibrium outcome of heterogeneous household preferences and endogenous housing prices. \citet{epple1999estimating} estimate equilibrium sorting across local jurisdictions, while \citet{bayer2004equilibrium} integrate heterogeneous discrete residential choice with urban housing-market equilibrium. \citet{bayer2007unified} extend this framework to estimate preferences for schools and neighborhood attributes. Empirical evidence further shows that school quality is capitalized into housing prices and that school districts and attendance boundaries affect residential and school segregation \cite{black1999better,caetano2019neighborhood,turnbull2021meta,owens2017racial,owens2016income,monarrez2023school}. More integrated equilibrium models connect community composition, school spending, housing, peer effects, and migration \cite{fernandez1996income,nechyba1999school}. Recent spatial and agent-based models represent additional feedback between residential and school choices \cite{agostinelli2024spatial,dignum2022mechanisms,dignum2024data}. Our framework builds directly on the equilibrium-sorting literature but embeds household sorting within a dynamic system in which government investment changes school quality, housing demand, enrollment, and future allocation needs.

\paragraph{School Investment and Resource Planning.} Causal evidence from school-finance reforms shows that sustained spending can improve educational and long-term economic outcomes, often with larger benefits for disadvantaged students and effects that emerge gradually \cite{jackson2016effects,lafortune2018school,biasi2023school,jackson2024impacts}. Dynamic general-equilibrium analysis has also examined the long-run distributional and welfare consequences of school-finance reform \cite{fernandez1998public}. Computational work addresses different planning decisions. \citet{mayerle2022optimal} jointly match students, schools, teachers, classes, and classrooms, while attendance-boundary optimization balances integration, travel, and enrollment constraints \cite{gillani2023redrawing}. \citet{guan2025contextual} incorporate predicted school choices into stochastic rezoning optimization, and \citet{zhang2026deep} use deep RL for multistep school-district adjustment. These methods optimize finance reform, resource matching, or geographic assignment. Our planner instead allocates a recurring school-level investment budget and evaluates effectiveness and fairness after endogenous changes in quality, residential sorting, housing prices, and enrollment.

\paragraph{Closest Comparisons.} Three recent studies are especially close to ours. \citet{agostinelli2024spatial} jointly model housing, school access, and heterogeneous residential sorting, but evaluate school-choice expansion and housing vouchers rather than repeated school investment. \citet{guan2025contextual} use predicted household school choices to inform a segregation-oriented boundary redesign, whereas our household response operates through a housing-market sorting equilibrium after each investment decision. \citet{zhang2026deep} also use RL for multistep school planning, but their actions modify district boundaries and their objectives concern distance and utilization. Our RL planner allocates investment and evaluates the long-term distribution of accessed school quality. The main distinction is the combination of sequential investment, endogenous education--housing feedback, and an explicit effectiveness--equity objective.

\section{Agent Models and Environment Components} \label{sec:model}

The model is built around an institutional feature of public education: residential location determines school access. When government investment improves a school, it may also increase nearby housing demand and the price of entering that district. The same policy can therefore raise school quality while changing which households can access it. We capture this feedback through two decision-making agents, households and the government, and three environment components: communities and housing markets, population turnover, and schools.

\subsubsection{Setting and Notation.}
We study a region with $M$ residential communities and $K$ public schools over periods $t=0,\ldots,T$. A fixed district map $\mathcal D:\{1,\ldots,M\}\to\{1,\ldots,K\}$ identifies the school serving each community. Thus, moving to community $m$ provides access to school $\mathcal D(m)$.
Community $m$ has a fixed amenity value $Q_m$, housing capacity $C_m$, current vacancies $V_m^t$, and housing price $P_m^t$. These quantities distinguish school access from other neighborhood amenities and housing costs. School $k$ has quality $r_k^t$, and $\mathbf r^t=(r_1^t,\ldots,r_K^t)$ denotes the regional school quality profile. Housing capacity limits immediate access to desirable schools. Instead, households compete for units released through turnover, and prices mediate this competition.

\subsubsection{Household Model.}
A household's residential choice depends on its types/characteristics (e.g., income, parental education, child performance) and stage in the school life cycle. We describe household $i$ by a state tuple
\begin{equation}
    s_i^t = (\theta_i,c_i^t,m_i^t),
    \qquad
    \theta_i=(y_i,z_i,e_i)\in\Theta.
\end{equation}
Here, $\theta_i$ denotes the type vector, which includes income $y_i$, parental education $z_i$, and the child's persistent latent ability $e_i$. $c_i^t$ is the child's age, and $m_i^t$ is the current community. 
Household types follow a finite distribution $\{ p_\theta \}_{\theta \in \Theta}$ estimated from China Education Panel Survey data~\cite{ceps2015}.

Households consider school access only when a child enters school. Children below age six occupy housing but are not yet enrolled. At age six, household $i$ joins the entry cohort $\mathcal{I}^t = \{i: c_i^t = 6\}$ and may stay or move. Once enrolled, the household remains in its chosen community for $W$ school periods. Thus, only part of the population can respond to a new allocation in any period.

For a school-entry household, action $a_i^t$ selects a community, including the option to remain in $m_i^t$. The choice trades off school quality, housing cost, moving cost $D$, and neighborhood amenities, captured by the utility
\begin{equation}
\label{eq:utility}
\begin{split}
    &u_{im}^t := u_i^t(a_i^t = m) \\
    &= \underbrace{\alpha_i r_{\mathcal D(m)}^t + \eta Q_m -\left[ \beta_i P_m^t +D \right] \mathbf{1}(m\neq m_i^t)}_{\bar{u}_{im}^t} + \varepsilon_{im}^t.
\end{split}
\end{equation}
The housing price and moving cost apply only to movers because an incumbent retains its current unit. The shocks $\varepsilon_{im}^t \sim \mathrm{Gumbel}(0,\tau)$ models unobserved household-specific considerations and are independent across households, communities, and periods. 
Detailed specifications of parameters and utility design are provided in Appendix~\ref{app:model-spec.household_param}.

New households entering the region provide a second source of housing demand. Let $\mathcal J^t$ denote these preschool-age in-migrants. They do not yet make a school-entry decision, so their current utility depends on amenities and housing prices:
\begin{equation}
\label{eq:incoming-utility}
    v_{im}^t:=v_i^t(a_i^t=m)
    =\underbrace{\eta Q_m-\beta_iP_m^t}_{\bar v_{im}^t}
    +\varepsilon_{im}^{\mathrm{in},t}.
\end{equation}
Their shocks satisfy the same independence and distributional assumptions as those in~\eqref{eq:utility}. In-migrants compete for vacant units and therefore affect the prices faced by $\mathcal I^t$. They may reconsider their residential locations when their children later enter school.

\subsubsection{Population Turnover.}
At the start of period $t$, turnover from the preceding period has formed the resident population $\mathcal H^t$, released vacancies, and identified $\mathcal I^t$ and $\mathcal J^t$. Eligible non-school-age households may depart, and new preschool-age households are drawn from $p_\theta$. Each departure releases one unit in the household's community, so vacancies arise through turnover rather than being reset exogenously.

The population module maintains $\sum_{i\in\mathcal H^t}\mathbf 1(m_i^t=m)\leq C_m$ and scales in-migration so that expected new demand does not exceed aggregate vacancies. This makes residential composition slow-moving and prevents a policy from immediately reshuffling all students after quality changes. Appendix~\ref{app:model-spec.population} provides the transition details.

\subsubsection{School Model.}
Schools are modeled as education-production components. They take government allocation and household choices as input and generate education performance measured by school qualities. Let $\mathcal L_k^t$ be the children enrolled in school $k$ after period-$t$ sorting, and let $N_k^t=|\mathcal L_k^t|$. The mean ability of enrolled children is 
\begin{equation}
    S_k^t=\frac{1}{N_k^t}\sum_{i\in\mathcal L_k^t}e_i.
\label{eq:composition}
\end{equation}
Thus, peer composition changes endogenously as households select across communities. 
Given per-student investment $h_k^t$, school quality updates after sorting according to
\begin{equation}
\label{eq:quality}
    r_k^{t+1} = (1-\delta) r_k^t + \delta \left[ w_v g_v(h_k^t) + w_s S_k^t \right].
\end{equation}
The persistence parameter $\delta$ reflects gradual change in school conditions. The increasing, concave function $g_v$ captures diminishing returns to investment, while $w_v$ and $w_s$ weight resources and peer composition. We normalize $r_k^t\in(0,1]$ and $S_k^t, g_v(h_k^t)\in[0,1]$, and set $w_v+w_s=1$, so $r_k^t$ is a comparable quality index.

For a child living in community $m_i^t$, we denote current accessed quality as $\rho_i^t=r_{\mathcal D(m_i^t)}^t$. We also define the child learning-outcome index by
\begin{equation}
\label{eq:outcome}
    o_i^t=\rho_i^t e_i^\psi.
\end{equation}
This index combines accessed quality with persistent ability, which can be used to measure a child's received educational benefit. The exponent $\psi$ controls how persistent ability enters the learning-outcome index.
%We use it only as a secondary evaluation measure, not as an empirical grade or the government's policy objective.
%Appendix~\ref{app:model-details} provides more formal convention. \red{app}

\subsubsection{Government Model.}
The government cannot directly assign households to schools. Its policy instrument is an annual budget $B_t$ divided among the $K$ schools. Define the allocation simplex as $\Delta_K = \{ \mathbf{b} \in \mathbb{R }_+^K: \sum_{k=1}^K b_k = 1 \}$. The government chooses shares $\mathbf b^t\in\Delta_K$, so school $k$ receives total funding $g_k^t = b_k^t B_t$. After enrollment, per-student investment is $h_k^t = g_k^t/ N_k^t$. It is clear that $\sum_{k=1}^K g_k^t=B_t$. 
%
%Since the shares sum to one, increasing one school's share necessarily reduces the resources available to other schools. 
The central difficulty to determine the budget share is temporal. Current investment changes future school quality, future school quality changes residential demand, and residential choices changes who benefits from subsequent investment. We formalize this sequence and the government's objective in the next section.

\section{Multi-Agent Learning Framework} \label{sec:problem}

The components above interact once in every simulated period. Turnover at the end of period $t-1$ creates the start-of-period population and housing state at $t$. The feedback is illustrated in Figure~\ref{fig:intro}.
We describe this interaction as a dynamic leader--follower framework with lagged household response. The decision process can be divided into two levels. 
In the lower level, households compete for available residential spots given school quality $\mathbf{r}^t$, which leads to a sorting equilibrium.
In the upper level, the government anticipates the sorting equilibrium and determine the budget allocation $\mathbf{b}_t$, which changes the quality observed by households in later periods. 

\subsection{Lower-Level Household Sorting}
\subsubsection{Logit Demand.}
Only communities with vacancies can accept a mover. Let $\mathcal M^t=\{m:V_m^t>0\}$. A school-entry household chooses between staying and moving, with feasible set $\mathcal A_i^t=\{m_i^t\}\cup\mathcal M^t$; an in-migrant must choose from $\mathcal M^t$. Given $\mathbf P^t=(P_1^t,\ldots,P_M^t)$ and school quality $\mathbf r^t$, the Gumbel shocks in~\eqref{eq:utility}--\eqref{eq:incoming-utility} imply (See Appendix~\ref{app:equilibrium.logit})
\begin{align}
    \sigma_{im}^t(\mathbf P^t;\mathbf r^t)
    &=\frac{\exp(\bar u_{im}^t/\tau)}
    {\sum_{m'\in\mathcal A_i^t}\exp(\bar u_{im'}^t/\tau)},
\label{eq:logit} \\
    \mu_{im}^t(\mathbf P^t)
    &=\frac{\exp(\bar v_{im}^t/\tau)}
    {\sum_{m'\in\mathcal M^t}\exp(\bar v_{im'}^t/\tau)}.
\label{eq:bg}
\end{align}
These probabilities describe utility-maximizing choices under independent unobserved preferences. Summing them gives expected fresh demand for vacant units in community $m$:
\begin{equation}
\begin{split}
    D_m^t(\mathbf P^t;\mathbf r^t)
    =&\sum_{\substack{i\in\mathcal I^t\\m\neq m_i^t}}
    \sigma_{im}^t(\mathbf P^t;\mathbf r^t)
    +\sum_{i\in\mathcal J^t}\mu_{im}^t(\mathbf P^t).
\end{split}
\label{eq:demand}
\end{equation}
The first term is demand from school-entry movers; the second is demand from in-migrants. Households that stay do not consume vacancies.

\subsubsection{Market Clearing.}
We make the following assumption.
\begin{assumption}[Slack supply and price floor] \label{ass:slack}
In every period, expected fresh demand does not exceed aggregate vacancies,
$\sum_mD_m^t\leq\sum_mV_m^t$. Moreover, housing prices satisfy
$P_m^t\geq P_{\min}>0$.
\end{assumption}

The first condition allows the region to contain vacant units even when particular districts are contested. The second gives housing a minimum reservation value. Communities with excess capacity remain at the price floor, while contested communities get a price premium until expected demand matches vacancies. Thus, the clearing condition becomes a complementarity system: 
for each $m\in\mathcal M^t$, we have
\begin{equation}
\label{eq:clearing}
\begin{split}
    D_m^t(\mathbf P^t;\mathbf r^t) \leq V_m^t,
    \qquad P_m^t\geq P_{\min},\\
    (P_m^t-P_{\min}) [V_m^t-D_m^t(\mathbf P^t;\mathbf r^t)] = 0.
\end{split}
\end{equation}

A \emph{sorting equilibrium} is a triple $(\mathbf{P}^{t\star}, \boldsymbol{\sigma}^{t\star}, \boldsymbol{\mu}^{t\star})$ in which both household groups choose according to~\eqref{eq:logit}--\eqref{eq:bg} at prevailing prices and expected demand satisfies~\eqref{eq:clearing}. Prices summarize competition from other households; once prices are fixed, the logit probabilities determine choice behavior. 
This equilibrium is central to the social-impact question because it captures whether school improvements are capitalized into housing prices that limit access. 
We provide a convex-program characterization and Algorithm~\ref{alg:sorting} to efficiently compute the equilibrium in Appendix~\ref{app:equilibrium.convex}. After obtaining the equilibrium, a capacity-feasible realization procedure then maps these probabilities to discrete residential assignments (See Appendix~\ref{app:equilibrium.assignment}).

\subsection{Upper-Level Allocation}
\subsubsection{Government Objective.}
The government rewards both the aggregate level and the distribution of publicly supplied school quality. Let $\rho_i^{t,+}=r_{\mathcal D(m_i^t)}^{t+1}$ be the end-of-period quality accessible to a child enrolled after period-$t$ sorting. For every household $i \in \mathcal{L}^t:=\bigcup_k \mathcal{L}_k^t$, we define
\begin{equation}
\label{eq:swf}
    \mathcal{W}_\epsilon( \boldsymbol{\rho}^{t,+})=
    \left(\frac{1}{|\mathcal L^t|}
    \sum_{i\in \mathcal{L}^t}(\rho_i^{t,+})^{1-\epsilon}
    \right)^{1/(1-\epsilon)},
    \quad \epsilon \geq0.
\end{equation}
At $\epsilon=0$, the objective is average accessed quality, which represents aggregate effectiveness under the fixed annual budget. Larger $\epsilon$ gives more priority to children with lower access, and the limit as $\epsilon\to\infty$ is maximin access. Parameter $\epsilon$ therefore makes the effectiveness--equity judgment explicit. For the special case where $\epsilon=1$, we use the continuous geometric mean extension.

%The corresponding secondary outcome is $o_i^{t,+} = \rho_i^{t,+} e_i^\psi$.

Access is the primary welfare basis because it represents the school quality supplied through public policy.
Across the horizon, the government seeks an observation-based policy $\pi$ that maximizes
\begin{equation}
\label{eq:gov-objective}
    \mathbb E_\pi\left[
    \sum_{t=0}^{T-1}\gamma^t
    \left\{\mathcal W_\epsilon(\boldsymbol\rho^{t,+})
    -\lambda\|\mathbf b^t-\mathbf b^{t-1}\|_2^2\right\}
    \right],
\end{equation}
with $\mathbf b_{-1}$ properly initialized. The optimization is subject to population turnover, household sorting, school-quality transitions, and the annual budget. Discount factor $\gamma$ controls the weight on future cohorts, and $\lambda\geq0$ discourages abrupt changes in allocation shares.

\begin{remark}
Access is the primary welfare basis because it represents the school quality supplied through public policy. The government does not optimize $\mathbf o^t$ directly because this index reflects persistent child ability, which is more appropriate to be used as a measure of educational outcomes. 
\end{remark}

\subsubsection{Learning Allocation Policy.}
The government’s problem is sequential and we represent it as a partially observable Markov decision process (POMDP). 
The full state $x^t\in\mathcal X$ contains the household-level population and the variables required for the transition to be Markov. The government receives only a summarized observation
\begin{equation}
\label{eq:observation}
    q^t = \mathcal{O}(x^t)=
    \left[\mathbf r^t,
    \left(\frac{\operatorname{occ}_m^t}{C_m}\right)_{m=1}^M,
    \overline{\mathbf y}^{t},
    \overline{\mathbf e}^{t},
    \boldsymbol\chi^t\right],
\end{equation}
where $\operatorname{occ}_m^t$ is community occupancy, $\overline{\mathbf y}^{t}$ and $\overline{\mathbf e}^{t}$ are community-level mean income and ability, and $\boldsymbol\chi^t$ summarizes the entry group.
The action is $\mathbf b^t\in\Delta_K$. We parameterize the observation-based policy as
\begin{equation}
\label{eq:dirichlet-policy}
    \mathbf b^t \sim \pi_\phi(\cdot \mid q^t)
    =\mathrm{Dirichlet}\!\left(\boldsymbol\kappa_\phi(q_t)\right),
\end{equation}
so every sampled allocation satisfies the annual budget constraint. After each action, the environment computes sorting, enrollment, the school transition, welfare, and turnover. The one-period reward is the corresponding term in~\eqref{eq:gov-objective}. We use PPO~\cite{schulman2017proximal} to learn a policy over full trajectories. Household sorting is solved within each environment step rather than learned by the government. %Algorithm~\ref{alg:system} gives the training loop.
Appendix~\ref{app:gov-decision} provides POMDP characterization and the training algorithm.

%\red{maybe remove}
%The framework does not assume that investment necessarily reduces or increases inequality. Instead, it makes the relevant pathways observable: which schools improve, who can access them, how housing prices mediate access, and whether gains persist across cohorts.

\section{Simulations and Experiments} \label{sec:exps}

\paragraph{Setup.} We initialize the environment with 4 schools and 12 communities, and the corresponding school districts are shown in Figure~\ref{fig:env-setting}. Each community has 100 residential positions. Household profiles, including income, educational level, child ability, are sampled from the CEPS data. Their utility coefficients are computed based on sampled profiles. The data preprocessing and parameter calibration are provided in Appendix~\ref{app:experiment.data}.

\begin{figure}[t]
\centering
    \includegraphics[width=0.9\columnwidth]{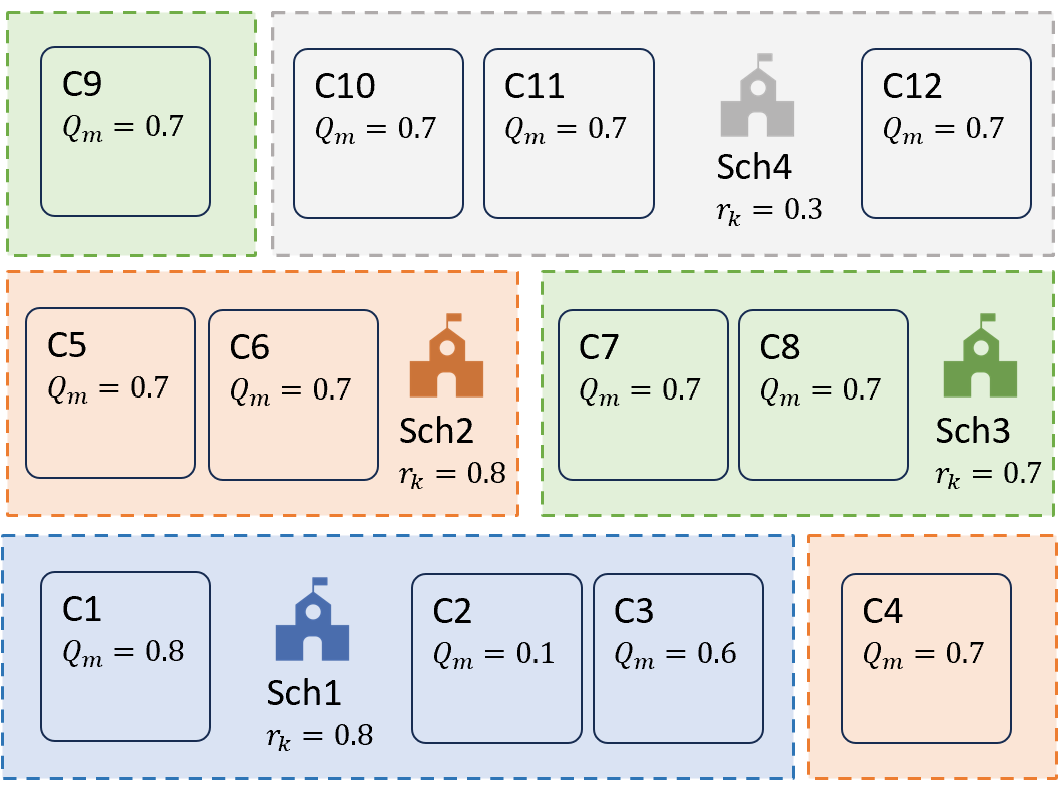}
    \caption{Initial environment settings. 12 communities (C1–C12) with amenity $Q_m$ are allocated to 4 school (Sch1-Sch4) districts, distinguished by colors. Households are spatially distributed among communities and schools provide heterogeneous educational quality $r_k$. This setting serves as the basis for evaluating different allocation policies.}
    \label{fig:env-setting}
\end{figure}

\paragraph{Metrics.} We propose six complementary metrics summarized in Table~\ref{tab:evaluation-metrics} to evaluate efficiency and equity. 
Mean Access (MA) and Human Capital (HC) measure overall allocation efficiency. 
Gini Access (GA) and Income Access Gap (IA) measure inequality in educational opportunities, while Income Dissimilarity (ID) and Income--Quality Correlation (IQ) capture socioeconomic segregation and the association between household income and school quality, respectively. 
%Table~\ref{tab:evaluation-metrics} summarizes their interpretations. 
Detailed definitions and implementations are provided in Appendix~\ref{app:experiment.metrics}.

\begin{table}[H]
    \centering
    \begin{tabular}{@{}lp{0.8\columnwidth}@{}}
        \toprule
        \textbf{Metric} & \textbf{Interpretation} \\
        \midrule
        MA & Average school quality accessed by households \\
        HC & Aggregate educational outcomes \\
        GA & Inequality in access to school quality \\
        IA & Access gap between income groups \\
        ID & Residential segregation across income groups \\
        IQ & Association between household income and school quality \\
        \bottomrule
    \end{tabular}
    \caption{Summary of evaluation metrics.}
    \label{tab:evaluation-metrics}
\end{table}

To evaluate the effectiveness, robustness, and scalability of our framework, we conduct four sets of experiments: (i) comparisons with representative baseline resource-allocation policies, (ii) module-level ablation studies, (iii) parameter sensitivity analyses, and (iv) scalability evaluations across increasing problem sizes.

%\begin{comment}
\subsection{Comparison with Allocation Baselines}
\label{sec:exps.comparison}

We compare the PPO allocation policy with three transparent policy rules. The \emph{equal-split} policy assigns the same total funding to every school; the \emph{enrollment-proportional} policy allocates funding according to current enrollment, thereby equalizing per-student investment; and the \emph{compensatory} policy assigns larger budget shares to lower-quality schools. Appendix~\ref{app:experiment.comparison} provides the formal
definitions and evaluation protocol.

Table~\ref{tab:evaluation-transpose} reports long-term average performance experimented by 100 seeds. The PPO policy provides the \emph{strongest balance} on the planner's primary objectives, achieving the highest Mean Access (MA; 0.478) and the second-lowest Gini Access (GA; 0.0164). The MA values are close across all policies (0.477--0.478), but the proposed policy reaches the highest access level while maintaining substantially lower inequality than equal-split and compensatory funding. Its Human Capital is also within 0.19\% of the best result (106.39 versus 106.60), indicating that the improved access balance does not require a substantial loss in overall educational outcomes.

\begin{table}[t]
\centering
\scriptsize
\setlength{\tabcolsep}{3.2pt}
\begin{tabular}{@{}lcccc@{}}
\toprule
Metric $\backslash$ Policy & Equal & Enroll Prop. & Compensatory & PPO \\
\midrule
MA {\tiny $(\times 10^{-2})$} $\uparrow$
  & 47.77(0.27) & 47.77(0.32) & 47.78(0.36) & \textbf{47.80(0.27)} \\
HC $\uparrow$
  & 106.59(1.58) & 106.28(1.58) & \textbf{106.60(1.67)} & 106.40(1.61) \\
\midrule
GA {\tiny $(\times 10^{-2})$} $\downarrow$
  & 2.29(0.21) & \textbf{0.59(0.07)} & 1.99(0.21) & 1.64(0.18) \\
IA {\tiny $(\times 10^{-3})$} $|\cdot|\downarrow$
  & -5.41(1.43) & \textbf{0.39(0.51)} & -4.51(1.16) & 6.00(1.34) \\
\midrule
ID {\tiny $(\times 10^{-1})$} $\downarrow$
  & 1.41(0.14) & 1.44(0.16) & \textbf{1.39(0.14)} & 1.47(0.15) \\
IQ {\tiny $(\times 10^{-1})$} $|\cdot|\downarrow$
  & -2.95(0.60) & \textbf{0.56(0.92)} & -2.90(0.59) & 3.99(0.66) \\
\bottomrule
\end{tabular}
\caption{Long-term average performance across allocation policies. 
%Bold indicates the best value. 
For IA and IQ, values closer to zero indicate greater equality between income groups.}
\label{tab:evaluation-transpose}
\end{table}

The income-based measures provide an important qualification. The IA remains close to zero under every policy (-0.0054--0.006), and ID also lies within a narrow range (0.139--0.147). The PPO policy therefore does not create a large income access gap, although its positive IQ (0.399) indicates that higher-income communities remain more strongly associated with higher-quality schools. Figure~\ref{fig:income-quality} illustrates this relationship.

These findings distinguish equality in the overall distribution of school quality from equality between income groups. The proposed policy improves the former, which is directly represented in the government objective, while its advantages are more limited for the latter. This is socially important because it shows that equalizing school-quality access alone may not eliminate the residential mechanisms through which income differences persist. The framework makes this distinction visible and can support future evaluations in which income-based fairness or housing access is incorporated explicitly into policy design.

\begin{figure}[t]
    \centering
    \includegraphics[width=1\columnwidth]
        {Figures/community_income_quality_bars.png}
    \caption{The policy impact on (a) school-level mean household income and (b) school quality before and after policy simulations.}
    \label{fig:income-quality}
\end{figure}

%Figure~\ref{fig:comparison-error-bars} summarizes repeated simulations under different random initializations. 
Besides, The limited variation within each policy in Table~\ref{tab:evaluation-transpose} shows that the main findings are not driven by a particular simulated population. In particular, the proposed policy consistently maintains high average access and relatively low access inequality. Because the numerical differences in MA are small, these results support a stable balance between access and equality rather than a large advantage on every individual metric.

% \begin{figure}[t]
%     \centering
%     \includegraphics[width=\columnwidth]
%         {Figures/evaluation_error_bar_chart.png}
%     \caption{Mean performance across random seeds. Error bars denote
%     \emph{[standard deviation/standard error/95\% confidence interval]}.}
%     \label{fig:comparison-error-bars}
% \end{figure}

\subsection{Ablation Study}
\label{sec:exps.ablation}

We examine the contributions of two social mechanisms: student-composition feedback in school-quality evolution and endogenous housing-price formation.
The student-composition feed mechanism affects peer composition on school quality evolution in \eqref{eq:quality} and removing it freezes $S^t_k$ in every simulation periods. 
%The housing-price formation corresponds to the endogenous housing price. 
Removing the housing-price formation mechanism fixes housing prices and eliminates the feedback between household choices and residential sorting.
Table~\ref{tab:ablation-results-transpose} compares the model without either mechanism, with each mechanism separately, and with both mechanisms. 

\begin{table}[t]
\centering
\small
\setlength{\tabcolsep}{3.2pt}
\begin{tabular}{@{}lcccc@{}}
\toprule
Metric $\backslash$ Mode & Neither & \shortstack{Composition\\only}
       & \shortstack{Prices\\only} & Full model \\
\midrule
MA {\tiny $(\times 10^{-2})$} $\uparrow$
  & 46.38 & 46.34 & 46.39 & \textbf{46.41} \\
HC $\uparrow$
  & 111.77 & 111.50 & 111.67 & \textbf{111.99} \\
\midrule
GA {\tiny $(\times 10^{-2})$} $\downarrow$
  & 2.08 & 2.08 & 1.26 & \textbf{1.22} \\
IA {\tiny $(\times 10^{-3})$} $|\cdot|\downarrow$
  & 0.10 & \textbf{0.00} & 0.10 & 0.30 \\
\midrule
ID {\tiny $(\times 10^{-1})$} $\downarrow$
  & 0.67 & \textbf{0.64} & 1.38 & 1.39 \\
IQ {\tiny $(\times 10^{-1})$} $|\cdot|\downarrow$
  & \textbf{0.05} & \textbf{0.05} & -0.06 & 0.13 \\
\bottomrule
\end{tabular}
\caption{Performance under different ablation settings. Bold indicates the
best value. For IA and IQ, values closer to zero indicate greater equality
between income groups.}
\label{tab:ablation-results-transpose}
\end{table}

The full model achieves the highest MA (0.4641) and HC (111.99) and the lowest GA (0.0122). The MA values remain within a narrow range (0.4634--0.4641), so the main benefit is not a large increase in one outcome but consistent performance across access, educational outcomes, and access inequality. When housing prices are endogenous, adding student-composition feedback produces modest improvements in MA ($+0.2\times 10^{-3}$), HC ($+0.314$), and GA ($-0.4\times 10^{-3}$). This suggests that composition feedback is most useful when it operates within the complete education--housing interaction.

The housing-price mechanism produces the more socially consequential change. When student-composition feedback is present, allowing prices to respond more than doubles ID (from 0.064 to 0.139), even as GA falls from 0.0208 to 0.0122. Thus, endogenous housing prices can make school-quality access more evenly distributed overall while simultaneously strengthening income sorting across locations. The apparently lower segregation obtained when prices are fixed should therefore be interpreted cautiously: it results partly from removing the channel through which demand for better schools affects residential costs.

Figure~\ref{fig:results} supports this interpretation dynamically. Housing prices respond to household demand only when endogenous price formation is included, while student-composition feedback changes how school quality evolves after households sort. Including both mechanisms enables the framework to represent an important social tension: school investment may improve overall access while its benefits remain mediated by residential markets.

\begin{figure}[t]
\centering
\includegraphics[width=\columnwidth]
    {Figures/price_quality_dynamics_combined.png}
    \caption{Evolution of (a) mean housing price and (b) school quality under the ablation settings.}
\label{fig:results}
\end{figure}

\subsection{Sensitivity and Scale Analysis}
\label{sec:exps.sensitivity-scale}

We first vary one parameter at a time while keeping the remaining parameters fixed. The analysis covers government inequality aversion, household preferences for school quality, sensitivity to housing prices, randomness in residential choice, and the contribution of student composition to school quality. Average access remains stable when inequality aversion, school-quality preference, and housing-price sensitivity are varied (MA: 0.5083--0.5137), while the equity measures respond more clearly. For example, increasing inequality aversion from 0 to 2 reduces GA from 0.0512 to 0.0261 without materially changing MA. This result confirms that the government objective can place greater weight on households with lower access without substantially reducing the overall access level.

The remaining parameters reveal socially meaningful trade-offs. Stronger housing-price sensitivity raises access inequality while leaving average access nearly unchanged. Greater randomness in residential choice reduces ID (from 0.1253 to 0.0637) but also lowers MA (from 0.5101 to 0.4857). Increasing the contribution of student composition similarly lowers GA but produces a limited reduction in MA. Thus, the main access result is stable under several behavioral assumptions, whereas the distributional outcomes appropriately depend on how households respond to school quality and housing costs. Appendix~\ref{app:experiment.sensitivity} reports all parameter settings and results.

We also evaluate six environments containing 12 or 24 communities and 4, 8, or 12 schools. Increasing the number of communities has the largest effect on average access (approximately 0.477 to 0.368), indicating that maintaining educational access becomes more difficult as households are distributed across a larger residential system. Increasing the number of schools has a smaller effect on MA but a clearer effect on inequality and sorting. With 12 communities, increasing the number of schools from 4 to 12 raises GA from 0.030 to 0.084 and Income Dissimilarity from 0.145 to 0.258.

These results show that expanding the number of schools does not by itself guarantee broader or more equal educational opportunity. In a residence-dependent system, a larger set of schools can create more differentiated access patterns and greater scope for household sorting. Educational planning should therefore consider the spatial distribution of households and schools together rather than treating school expansion as an independent solution. Complete results are provided in Appendix~\ref{app:scalability}.

\section{Disucssions and Conclusions} \label{sec:conclusions}

%In this work, we have introduced a computational framework that captures the complex socio-economic loop for educational investment allocation and links government decision, household sorting, school dynamics, housing market, and population turnover together to provide a holist understanding and provide insights for effective policy making. Important for policy experiments. provided a computational model to achieve this. We also provides RL algorithms for policy learning. and experiments have shown xxx phenomenon similar to social common sense, which demonstrate the effectiveness of the framework. 

In this work, we presented a dynamic multi-agent framework for allocating public-school investment under endogenous household responses. Its main contribution is to connect a tractable household-sorting equilibrium with sequential government decisions: investment changes school quality, which affects housing prices, residential sorting, enrollment, and future allocation needs. We further formulate the government problem as a POMDP and use RL to learn budget-feasible policies that balance aggregate access and equity. In simulations, the learned policy achieves the highest access level and the second-lowest access Gini among representative baselines, showing that explicitly accounting for the education--housing feedback can improve long-term allocation decisions. 
The PPO policy achieves a favorable balance between efficiency and equity by reducing disparities in school quality while preserving socioeconomic matching.
This promotes more equitable access to schools while increasing per-student investment. 
%This enables lower GA while improving MA and HC.
Scalability experiments further show that spatial expansion poses greater challenges to accessibility, whereas increasing school quantity alone may intensify sorting and reduce equity.
Overall, the proposed framework provides a computational foundation for evaluating educational investments whose effectiveness and distributional consequences evolve through household behavior over time.

%limitation: population structure change is not precisely modeled, exceed the scope and complex. limited education resource allocation only funding. only public school, no private, no transfer student considered. assumption that only school-entry group compete for housing, no transfer student considered.

Several extensions provide directions for future work. The current setting focuses on public-school funding with fixed attendance boundaries and residential decisions made at school entry. This allows us to isolate the central investment--sorting feedback without incorporating every education and housing mechanism into one model. Future work could introduce student transfers and richer demographic dynamics, private or charter-school options, and additional policy instruments such as teacher allocation and boundary adjustment. Calibrating these extensions with regional data would also support location-specific policy analysis.

\bibliography{aaai2027}

@article{chetty2016effects,
  title={The effects of exposure to better neighborhoods on children: New evidence from the moving to opportunity experiment},
  author={Chetty, Raj and Hendren, Nathaniel and Katz, Lawrence F},
  journal={American Economic Review},
  volume={106},
  number={4},
  pages={855--902},
  year={2016},
  publisher={American Economic Association 2014 Broadway, Suite 305, Nashville, TN 37203}
}

@article{biasi2023school,
  title={School finance equalization increases intergenerational mobility},
  author={Biasi, Barbara},
  journal={Journal of Labor Economics},
  volume={41},
  number={1},
  pages={1--38},
  year={2023},
  publisher={The University of Chicago Press Chicago, IL}
}

@article{owens2018income,
  title={Income segregation between school districts and inequality in students’ achievement},
  author={Owens, Ann},
  journal={Sociology of education},
  volume={91},
  number={1},
  pages={1--27},
  year={2018},
  publisher={Sage Publications Sage CA: Los Angeles, CA}
}

@article{black1999better,
  title={Do better schools matter? Parental valuation of elementary education},
  author={Black, Sandra E},
  journal={The quarterly journal of economics},
  volume={114},
  number={2},
  pages={577--599},
  year={1999},
  publisher={MIT Press}
}

@article{caetano2019neighborhood,
  title={Neighborhood sorting and the value of public school quality},
  author={Caetano, Gregorio},
  journal={Journal of Urban Economics},
  volume={114},
  pages={103193},
  year={2019},
  publisher={Elsevier}
}

@article{barseghyan2019peer,
  title={Peer preferences, school competition, and the effects of public school choice},
  author={Barseghyan, Levon and Clark, Damon and Coate, Stephen},
  journal={American Economic Journal: Economic Policy},
  volume={11},
  number={4},
  pages={124--158},
  year={2019},
  publisher={American Economic Association}
}

@article{abdulkadirouglu2003school,
  title={School choice: A mechanism design approach},
  author={Abdulkadiro{\u{g}}lu, Atila and S{\"o}nmez, Tayfun},
  journal={American economic review},
  volume={93},
  number={3},
  pages={729--747},
  year={2003},
  publisher={American Economic Association}
}

@article{owens2016income,
  title={Income segregation between schools and school districts},
  author={Owens, Ann and Reardon, Sean F and Jencks, Christopher},
  journal={American Educational Research Journal},
  volume={53},
  number={4},
  pages={1159--1197},
  year={2016},
  publisher={Sage Publications Sage CA: Los Angeles, CA}
}

@article{monarrez2023school,
  title={School attendance boundaries and the segregation of public schools in the United States},
  author={Monarrez, Tom{\'a}s E},
  journal={American Economic Journal: Applied Economics},
  volume={15},
  number={3},
  pages={210--237},
  year={2023},
  publisher={American Economic Association 2014 Broadway, Suite 305, Nashville, TN 37203-2425}
}

@article{bayer2007unified,
  author  = {Bayer, Patrick and Ferreira, Fernando and McMillan, Robert},
  title   = {A Unified Framework for Measuring Preferences for Schools and Neighborhoods},
  journal = {Journal of Political Economy},
  year    = {2007},
  volume  = {115},
  number  = {4},
  pages   = {588--638},
  doi     = {10.1086/522381}
}

@techreport{agostinelli2024spatial,
  author      = {Agostinelli, Francesco and Luflade, Margaux and Martellini, Paolo},
  title       = {On the Spatial Determinants of Educational Access},
  institution = {National Bureau of Economic Research},
  type        = {Working Paper},
  number      = {32246},
  year        = {2024},
  doi         = {10.3386/w32246}
}

@article{dignum2022mechanisms,
  author  = {Dignum, Eric and Athieniti, Efi and Boterman, Willem and Flache, Andreas and Lees, Michael},
  title   = {Mechanisms for Increased School Segregation Relative to Residential Segregation: A Model-Based Analysis},
  journal = {Computers, Environment and Urban Systems},
  year    = {2022},
  volume  = {93},
  pages   = {101772},
  doi     = {10.1016/j.compenvurbsys.2022.101772}
}

@article{lafortune2018school,
  author  = {Lafortune, Julien and Rothstein, Jesse and Schanzenbach, Diane Whitmore},
  title   = {School Finance Reform and the Distribution of Student Achievement},
  journal = {American Economic Journal: Applied Economics},
  year    = {2018},
  volume  = {10},
  number  = {2},
  pages   = {1--26},
  doi     = {10.1257/app.20160567}
}

@article{jackson2024impacts,
  author  = {Jackson, C. Kirabo and Mackevicius, Claire L.},
  title   = {What Impacts Can We Expect from School Spending Policy? Evidence from Evaluations in the United States},
  journal = {American Economic Journal: Applied Economics},
  year    = {2024},
  volume  = {16},
  number  = {1},
  pages   = {412--446},
  doi     = {10.1257/app.20220279}
}

@article{mayerle2022optimal,
  author  = {Mayerle, S{\'e}rgio F. and Rodrigues, Hidelbrando F. and de Figueiredo, Jo{\~a}o Neiva and De Genaro Chiroli, Daiane M.},
  title   = {Optimal Student/School/Class/Teacher/Classroom Matching to Support Efficient Public School System Resource Allocation},
  journal = {Socio-Economic Planning Sciences},
  year    = {2022},
  volume  = {83},
  pages   = {101341},
  doi     = {10.1016/j.seps.2022.101341}
}

@article{gillani2023redrawing,
  author  = {Gillani, Nabeel and Beeferman, Doug and Vega-Pourheydarian, Christine and Overney, Cassandra and Van Hentenryck, Pascal and Roy, Deb},
  title   = {Redrawing Attendance Boundaries to Promote Racial and Ethnic Diversity in Elementary Schools},
  journal = {Educational Researcher},
  year    = {2023},
  volume  = {52},
  number  = {6},
  pages   = {348--364},
  doi     = {10.3102/0013189X231170858}
}

@article{guan2025contextual,
  author  = {Guan, Hongzhao and Gillani, Nabeel and Simko, Tyler and Mangat, Jasmine and Van Hentenryck, Pascal},
  title   = {Contextual Stochastic Optimization for School Desegregation Policymaking},
  journal = {Proceedings of the AAAI Conference on Artificial Intelligence},
  year    = {2025},
  volume  = {39},
  number  = {27},
  pages   = {28024--28032},
  doi     = {10.1609/aaai.v39i27.35020}
}

@article{zhang2026deep,
  author  = {Zhang, Di and Mu, Senlin and Mango, Joseph and Li, Xiang},
  title   = {Deep Reinforcement Learning for Spatial Resource Allocation: A Case Study of School Districting},
  journal = {Environment and Planning B: Urban Analytics and City Science},
  year    = {2026},
  volume  = {53},
  number  = {2},
  pages   = {418--434},
  doi     = {10.1177/23998083241302187}
}

@article{abdulkadiroglu2020parents,
  author  = {Abdulkadiro{\u{g}}lu, Atila and Pathak, Parag A. and Schellenberg, Jonathan and Walters, Christopher R.},
  title   = {Do Parents Value School Effectiveness?},
  journal = {American Economic Review},
  year    = {2020},
  volume  = {110},
  number  = {5},
  pages   = {1502--1539},
  doi     = {10.1257/aer.20172040}
}

@article{turnbull2021meta,
  author  = {Turnbull, Geoffrey K. and Zheng, Minrong},
  title   = {A Meta-Analysis of School Quality Capitalization in {U.S.} House Prices},
  journal = {Real Estate Economics},
  year    = {2021},
  volume  = {49},
  pages   = {1120--1171},
  doi     = {10.1111/1540-6229.12300}
}

@article{owens2017racial,
  author  = {Owens, Ann},
  title   = {Racial Residential Segregation of School-Age Children and Adults: The Role of Schooling as a Segregating Force},
  journal = {RSF: The Russell Sage Foundation Journal of the Social Sciences},
  year    = {2017},
  volume  = {3},
  number  = {2},
  pages   = {63--80},
  doi     = {10.7758/RSF.2017.3.2.03}
}

@article{fernandez1996income,
  author  = {Fern{\'a}ndez, Raquel and Rogerson, Richard},
  title   = {Income Distribution, Communities, and the Quality of Public Education},
  journal = {The Quarterly Journal of Economics},
  year    = {1996},
  volume  = {111},
  number  = {1},
  pages   = {135--164},
  doi     = {10.2307/2946660}
}

@article{nechyba1999school,
  author  = {Nechyba, Thomas J.},
  title   = {School Finance Induced Migration and Stratification Patterns: The Impact of Private School Vouchers},
  journal = {Journal of Public Economic Theory},
  year    = {1999},
  volume  = {1},
  number  = {1},
  pages   = {5--50},
  doi     = {10.1111/1097-3923.00002}
}

@article{dignum2024data,
  author  = {Dignum, Eric and Boterman, Willem and Flache, Andreas and Lees, Mike},
  title   = {A Data-Driven Agent-Based Model of Primary School Segregation in Amsterdam},
  journal = {The Journal of Mathematical Sociology},
  year    = {2024},
  volume  = {48},
  number  = {3},
  pages   = {362--392},
  doi     = {10.1080/0022250X.2024.2340136}
}

@article{jackson2016effects,
  author  = {Jackson, C. Kirabo and Johnson, Rucker C. and Persico, Claudia},
  title   = {The Effects of School Spending on Educational and Economic Outcomes: Evidence from School Finance Reforms},
  journal = {The Quarterly Journal of Economics},
  year    = {2016},
  volume  = {131},
  number  = {1},
  pages   = {157--218},
  doi     = {10.1093/qje/qjv036}
}

@article{fernandez1998public,
  author  = {Fern{\'a}ndez, Raquel and Rogerson, Richard},
  title   = {Public Education and Income Distribution: A Dynamic Quantitative Evaluation of Education-Finance Reform},
  journal = {American Economic Review},
  year    = {1998},
  volume  = {88},
  number  = {4},
  pages   = {813--833}
}

@techreport{bayer2004equilibrium,
  author      = {Bayer, Patrick and McMillan, Robert and Rueben, Kim},
  title       = {An Equilibrium Model of Sorting in an Urban Housing Market},
  institution = {National Bureau of Economic Research},
  type        = {Working Paper},
  number      = {10865},
  year        = {2004},
  doi         = {10.3386/w10865}
}

@article{epple1999estimating,
  author  = {Epple, Dennis and Sieg, Holger},
  title   = {Estimating Equilibrium Models of Local Jurisdictions},
  journal = {Journal of Political Economy},
  year    = {1999},
  volume  = {107},
  number  = {4},
  pages   = {645--681},
  doi     = {10.1086/250074}
}

@article{schulman2017proximal,
  title={Proximal policy optimization algorithms},
  author={Schulman, John and Wolski, Filip and Dhariwal, Prafulla and Radford, Alec and Klimov, Oleg},
  journal={arXiv preprint arXiv:1707.06347},
  year={2017}
}

@inproceedings{mcfadden1972conditional,
  title={Conditional logit analysis of qualitative choice behavior},
  author={McFadden, Daniel},
  year={1972}
}

@misc{ceps2015,
  author       = {{National Survey Research Center}},
  title        = {{China Education Panel Survey (CEPS), 2013--2014 Baseline Survey}},
  year         = {2015},
  howpublished = {Chinese National Survey Data Archive},
  url          = {http://ceps.ruc.edu.cn/}
}

% Check whether the conference requires a reproducibility checklist to be included in the paper.
% If so, you can uncomment the following line and ajust the path to include it.
%\input{ReproducibilityChecklist.tex}

\newpage
\onecolumn
\appendix
\setcounter{secnumdepth}{2}

\section{Model Specifications} \label{app:model-spec}

\subsection{Household Parameters} \label{app:model-spec.household_param}
%Parental education increases the household's preference for school quality, while income affects its sensitivity to housing prices. 
%
In Equation~\eqref{eq:utility}, we set $\alpha_i = \alpha (1+\xi z_i )$, which is increasing in parental education and measures household $i$'s preference for school quality.
Price sensitivity is $\beta_i = \beta_0/y_i$, so an equal price increase places a greater utility burden on a lower-income household. 
Here, $\alpha$, $\xi$, and $\beta_0$ are tunable parameters. 
This is the main channel from school investment to socioeconomic sorting: improved quality can increase local demand and prices, thereby limiting access for households with fewer resources. Parameter $\eta$ weights fixed neighborhood amenities, and $\tau$ controls the dispersion of unobserved preferences. As $\tau\to0$, choices concentrate on the highest deterministic utility; larger $\tau$ gives flatter choice probabilities. All taste shocks are independent across households, alternatives, and periods. 
%Housing prices and moving costs apply only to movers because incumbents retain their units.
%Preschool in-migrants are myopic with respect to future school quality in their current location choice. They choose using~\eqref{eq:incoming-utility}, compete for current vacancies, and may reconsider their locations at school entry. 
Child ability $e_i$ is persistent for a household throughout the simulation. 

% \subsubsection{\red{Household-Type Construction. and CEPS data processing}}
% The finite type distribution $p_\theta$ is constructed from income, parental education, and child information in the China Education Panel Survey~\cite{ceps2015}.
% % TODO: Add the final CEPS preprocessing, discretization, and calibration procedure.

% \paragraph{Government parameters.}
% Discount factor $\gamma$ determines the weight on future access, $\epsilon$ controls inequality aversion, and $\lambda$ controls allocation stability. We set $\mathbf{b}_{-1}=\mathbf0$ at initialization. The base action contains only school investment; admission quotas, private-school choice, and residential transfers by already enrolled households are outside the model.

\subsection{Demographic Transitions} \label{app:model-spec.population}

The population transition from the end of period $t$ to the start of period $t+1$ proceeds in three steps:
\begin{enumerate}
    \item \emph{Aging.} Set $c_i^{t+1}=c_i^t+1$ for every resident child.
    \item \emph{Departures.} Each eligible non-school-age household exits according to the specified departure process. Enrolled households cannot exit during the $W$-period schooling window. Each departure releases one unit in the household's community.
    \item \emph{In-migration.} Draw incoming preschool-age household types from $p_\theta$. In period $t+1$, their locations are determined jointly with school-entry movers through the lower-level sorting equilibrium rather than assigned exogenously.
\end{enumerate}
At the beginning of period $t+1$, the transition identifies the school-entry group $\mathcal I^{t+1}$, the in-migrant group $\mathcal J^{t+1}$, and community vacancies $\mathbf V^{t+1}$. The government observes the resulting school and community summaries before selecting $\mathbf{b}^{t+1}$. In-migration is scaled so that aggregate expected fresh demand does not exceed vacancies and community capacity is preserved.

%\subsection{Special Cases on Zero School Enrollment}
\subsection{School Enrollment Specifications}
%The persistence parameter $\delta$ controls how quickly investment and peer composition replace inherited quality. Function $g_v$ captures diminishing returns to per-student investment, and $w_v$ and $w_s$ weight investment and peer composition. The exponent $\psi$ in~\eqref{eq:outcome} controls how persistent ability enters the secondary learning-outcome index.

%Every school has positive enrollment initially. If a school later becomes empty, its reference enrollment $\widetilde N_k^t$ and composition $\widetilde S_k^t$ are carried forward from its most recent nonempty cohort; otherwise, $\widetilde N_k^t=N_k^t$ and $\widetilde S_k^t=S_k^t$. Appendix~\ref{app:model-details} provides the formal convention. \red{app}

Every school has positive enrollment at initialization. However, free competitions allow households decide to not enroll certain schools, resulting in zero enrollment. In this case, we use the last non-zero enrollment number to help government decide the investment allocation. 
We define $\ell_k(t)=\max\{s \leq t: N_k^s > 0\}$. If $N^t_k = 0$ for some school $k$ and $t > 0$, we set the reference enrollment and composition as
\begin{equation}
\label{eq:empty-school}
    \widetilde N_k^t=N_k^{\ell_k(t)},
    \qquad
    \widetilde S_k^t=S_k^{\ell_k(t)}.
\end{equation}
Thus, current values are used for a nonempty school, while an empty school uses its most recent nonempty data. Total funding remains $g_k^t = b_k^t B^t$, per-student investment is $h_k^t = g_k^t / \widetilde N_k^t$.

\section{Sorting-Equilibrium Characterization} \label{app:equilibrium}

\subsection{Logit Demand Derivation} \label{app:equilibrium.logit}
For household $i$, the logit demand on a community is the probability distribution of applying the softmax function to the utilities of all communities. It originates from the discrete choice theory in \cite{mcfadden1972conditional}. We briefly introduce theory here.
% The logit demand uses softmax as a probability distribution to select communities. from \cite{mcfadden1972conditional}.

Suppose an agent chooses an object $j$ from a finite set $S$ based on the utility $U_j = V_j + \varepsilon_j$, where $V_j$ is a deterministic quantity and $\varepsilon_j$ is random. The utility is defined for every object $j \in S$, and we assume that $\varepsilon_j$ are i.i.d. for all $j$. 
%Suppose an agent chooses one from a set $S$. The agent evaluate every choice $j \in S$ with her utility function and chooses the hightest one. Let $U_j = V_j + \varepsilon_j$ be the utility function, where $\varepsilon_j$ is a random variable and $\varepsilon_j$ are i.i.d. for all $j$. The agent chooses with the highest utility. 
The probability of the agent choosing the object $i$ can be written as
\begin{equation}
    \operatorname{Pr}(U_i > U_j, \forall j) \quad \Rightarrow \quad 
    \operatorname{Pr}(V_i + \varepsilon_i > V_j + \varepsilon_j, \forall j) \quad \Rightarrow \quad 
    \operatorname{Pr}(\varepsilon_j < \varepsilon_i + V_i-V_j, \forall j):=\operatorname{Pr}(i \text{ chosen}).
\end{equation}
%which is equivalent to $\operatorname{Pr}(\varepsilon_j < \varepsilon_i + V_i-V_j, \forall j)$.
%
% Suppose the agent chooses the object $i$, meaning that $U_i$ is the largest among all $U_j$ for $j \in S, j\neq i$. Then we have
% \begin{equation}
%     \operatorname{Pr}(U_i > U_j, \forall j) \quad \Rightarrow \quad 
%     \operatorname{Pr}(V_i + \varepsilon_i > V_j + \varepsilon_j, \forall j) \quad \Rightarrow \quad 
%     \operatorname{Pr}(\varepsilon_j < \varepsilon_i + V_i-V_j, \forall j).
% \end{equation}
% Thus we need to compute this probability. 
%
When $\varepsilon_j$, $j \in S$, follows a Gumbel distribution $\operatorname{Gumbel}(0,\tau)$, its CDF and PDF are given by 
\begin{equation*}
    F(x)=e^{-e^{-x/\tau}}, \quad f(x) = \frac{1}{\tau} e^{-(x/\tau + e^{-x/\tau})}.
\end{equation*}
%is $F(x)=e^{-e^{-x/\tau}}$ and PDF as $f(x) = \frac{1}{\tau} e^{-(x/\tau + e^{-x/\tau})}$. 
Since all $\varepsilon_j$ are independent, when $i$ is chosen conditioned on the value of $\varepsilon_i = t$, we have
%every other random variables must satisfy $\varepsilon_j < t + V_i - V_j$. 
% \begin{equation*}
%     \varepsilon_j < t + V_i - V_j.
% \end{equation*}
%Since all $\varepsilon_j$ are independent, we have
\begin{equation}
\label{eq:prob_i1}
    \operatorname{Pr}(i \text{ chosen}) = \int_{-\infty}^\infty f(t) \prod_{j \neq i} F(t + V_i - V_j) dt
    = \int_{-\infty}^\infty \frac{1}{\tau} e^{-t/\tau} e^{-e^{t/\tau} \sum_{j} e^{(V_j-V_i)/\tau}} dt.
\end{equation}
% Therefore, taking the formula into the equation, we obtain
% \begin{equation}
%     \operatorname{Pr}(i \text{ chosen}) = \int_{-\infty}^\infty \frac{1}{\tau} e^{-t/\tau} e^{-e^{t/\tau} \sum_{j} e^{(V_j-V_i)/\tau}} dt
% \end{equation}
Using a change of variable $y = e^{-t}$, we can simplify \eqref{eq:prob_i1} and obtain
\begin{equation}
    \label{eq:prob_i}
    \operatorname{Pr}(i \text{ chosen}) = \int_0^\infty e^{-y \sum_{j} e^{(V_j-V_i)/\tau}} dy = \frac{1}{\sum_{j} e^{(V_j-V_i)/\tau}} = \frac{e^{V_i / \tau}}{\sum_j e^{V_j/\tau}}.
\end{equation}

Checking utilities \eqref{eq:utility}-\eqref{eq:incoming-utility} and using the result in \eqref{eq:prob_i}, we obtain the logit demand results in \eqref{eq:logit}-\eqref{eq:bg}.

\subsection{Convex-Program Characterization}
\label{app:equilibrium.convex}

We characterize the sorting-equilibrium price vector as the solution of a convex program. Let $\mathbf{P}=(P_m)_{m\in\mathcal M^t}$ denotes the prices of communities with vacancies. Prices outside $\mathcal M^t$ do not enter the sorting problem. Define the potential function
\begin{equation}
\label{eq:potential}
\begin{split}
    \Phi^t(\mathbf{P}) =
    \sum_{i\in\mathcal I^t}\frac{\tau}{\beta_i}
    \log\sum_{\ell\in\mathcal A_i^t}
    \exp\left(\bar u_{i\ell}^t(\mathbf{P})/\tau\right) 
    +
    \sum_{i\in\mathcal J^t}\frac{\tau}{\beta_i}
    \log\sum_{\ell\in\mathcal M^t}
    \exp\left(\bar v_{i\ell}^t(\mathbf{P})/\tau\right)
    +\sum_{m\in\mathcal M^t}V_m^tP_m.
\end{split}
\end{equation}
The first two terms are the log-sum-exp potentials associated with school-entry households and in-migrants, respectively. The last term accounts for the available housing supply.

\begin{proposition}[Convex-program characterization]
\label{prop:exist}
Under Assumption~\ref{ass:slack}, suppose that $\tau>0$ and $\beta_i>0$ for every active household. A feasible price vector $\mathbf{P}^{t\star}$ satisfies the market-clearing conditions in~\eqref{eq:clearing} if and only if
\begin{equation}
\label{eq:convex-program}
    \mathbf{P}^{t\star}
    \in
    \arg\min_{\mathbf{P}\geq P_{\min}\mathbf 1}
    \Phi^t(\mathbf{P}).
\end{equation}
A minimizer exists if
\begin{equation}
\label{eq:strict-residual-capacity}
    \sum_{m\in\mathcal M^t}V_m^t
    >
    |\mathcal J^t|.
\end{equation}
The minimizer is unique whenever $\Phi^t$ is strictly convex on the constraint set. In particular, a sufficient condition for strict convexity is that $\mathcal J^t$ is nonempty and at least one household in $\mathcal I^t$ has a feasible destination other than its current community.
\end{proposition}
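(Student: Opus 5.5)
The plan is to show that $\Phi^t$ is a convex, continuously differentiable function whose gradient is exactly the excess supply $V_m^t-D_m^t(\mathbf P;\mathbf r^t)$. Once that holds, \eqref{eq:clearing} is precisely the KKT system of \eqref{eq:convex-program}, and the three claims reduce to standard convex-analysis facts.

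\emph{Gradient and convexity.} For a school-entry household $i\in\mathcal I^t$, $\bar u_{i\ell}^t$ is affine in $\mathbf P$. It depends on $P_\ell$ only when $\ell\neq m_i^t$, with $\partial\bar u_{i\ell}^t/\partial P_\ell=-\beta_i$; the stay option $\ell=m_i^t$ does not depend on any price. Differentiating the log-sum-exp gives
\begin{equation*}
\frac{\partial}{\partial P_m}\frac{\tau}{\beta_i}\log\sum_{\ell\in\mathcal A_i^t}\exp(\bar u_{i\ell}^t/\tau)=-\sigma_{im}^t\mathbf 1(m\neq m_i^t).
\end{equation*}
The same computation for $i\in\mathcal J^t$ gives $-\mu_{im}^t$. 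Summing these and adding the linear term yields $\nabla_m\Phi^t=V_m^t-D_m^t(\mathbf P;\mathbf r^t)$ with $D_m^t$ as in \eqref{eq:demand}; stayers drop out automatically. Each log-sum-exp of affine maps is convex and the weights $\tau/\beta_i$ are positive, so $\Phi^t$ is convex and $C^1$. The constraint set $\{\mathbf P\ge P_{\min}\mathbf 1\}$ is polyhedral, so no constraint qualification is needed and KKT conditions are necessary and sufficient for global optimality. These conditions read $\nabla_m\Phi^t\ge0$, $P_m-P_{\min}\ge0$, and $(P_m-P_{\min})\nabla_m\Phi^t=0$, which is exactly \eqref{eq:clearing}. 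This proves the equivalence.

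\emph{Existence (expected main obstacle).} The box is unbounded, so I will prove coercivity through the recession slope of $\Phi^t$ along feasible directions $d\ge0$, $d\neq0$. The slope of $\frac{\tau}{\beta_i}\log\sum\exp(\cdot)$ at infinity is $\frac1{\beta_i}\max_\ell$ of the slopes of $\bar u_{i\ell}$.
\begin{itemize}
\item For $i\in\mathcal I^t$, the price-free stay option caps this at $\max(0,\max_\ell(-d_\ell))=0$.
\item For $i\in\mathcal J^t$, it equals $-\min_{\ell\in\mathcal M^t}d_\ell$.
\end{itemize}
Hence the asymptotic slope is $\sum_m V_m^td_m-|\mathcal J^t|\min_\ell d_\ell$. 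If $\min d=0$, this equals $\sum_m V_m^td_m>0$, since $V_m^t>0$ on $\mathcal M^t$. If $\min d>0$, it is at least $(\sum_mV_m^t-|\mathcal J^t|)\min d>0$ by \eqref{eq:strict-residual-capacity}. A closed proper convex function with positive recession slope in every feasible direction has bounded nonempty sublevel sets on the box, so Weierstrass gives a minimizer. The step needing care is computing these recession slopes rigorously, for example via $\max\le\mathrm{LSE}\le\max+\tau\log|\mathcal A|$.

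\emph{Uniqueness.} Strict convexity trivially yields a unique minimizer. For the sufficient condition, I compute Hessians.
\begin{itemize}
\item The $\mathcal J^t$-term for household $i$ has Hessian $\frac{\beta_i}{\tau}\bigl(\mathrm{diag}\,\mu_i-\mu_i\mu_i^\top\bigr)$ on $\mathcal M^t$. All $\mu_{im}>0$, so its null space is exactly $\mathrm{span}(\mathbf 1)$.
\item A household $i\in\mathcal I^t$ with a destination other than $m_i^t$ contributes $\frac{\beta_i}{\tau}$ times the covariance matrix of the price-slopes under $\sigma_i$, where the stay option has slope $0$. Along $d=\mathbf 1$, the mover options have slope $-1$ and the stay option has slope $0$, all with positive probabilities, so the quadratic form is strictly positive.
\end{itemize}
The Hessian of $\Phi^t$ is a sum of PSD matrices whose null spaces intersect only in $\{0\}$. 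It is therefore positive definite everywhere, $\Phi^t$ is strictly convex, and the minimizer $\mathbf P^{t\star}$ is unique.
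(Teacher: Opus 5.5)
Your argument is correct. The gradient identity, the KKT equivalence over the polyhedral constraint set, and the Hessian-based uniqueness argument coincide with the paper's proof. The paper phrases your null-space step as: the in-migrant variance is positive for every non-constant direction, and a school-entry household with a feasible move is strictly curved along $\mathbf 1$.

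The genuine difference is the existence step. The paper uses the same two observations you use: the price-free stay option bounds each school-entry term from below, and each in-migrant term dominates the utility of the cheapest vacant community. It turns them into the explicit global bound $\Phi^t(\mathbf P)\ge C^t+\sum_m V_m^tP_m-|\mathcal J^t|\min_m P_m$. An elementary inequality then shows the right-hand side is at least $aP_{\min}+\min\{a,\underline V\}(\max_m P_m-P_{\min})$, with $a=\sum_m V_m^t-|\mathcal J^t|$ and $\underline V=\min_m V_m^t$. Coercivity plus Weierstrass finishes the proof with no further machinery.

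Your route computes the asymptotic slope of $\Phi^t$ along each feasible direction $d\ge 0$. It then invokes the theorem that a closed proper convex function with no direction of recession has bounded sublevel sets (e.g.\ Rockafellar, Thms.~8.7 and 27.1). It also needs additivity of recession functions for finite-valued summands. What it buys: your slope $\sum_m V_m^td_m-|\mathcal J^t|\min_\ell d_\ell$ is exact rather than a lower bound. This exposes $d=\mathbf 1$ as the critical direction and shows that $\Phi^t$ is unbounded below when $\sum_m V_m^t<|\mathcal J^t|$. So the capacity condition cannot be relaxed beyond the equality case. The paper's version, in exchange, is self-contained and gives an explicit linear growth rate.
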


\begin{proof}
We first establish convexity and derive the gradient of $\Phi^t$. From~\eqref{eq:utility} and~\eqref{eq:incoming-utility}, for $m\in\mathcal M^t$,
\begin{align}
    \frac{\partial \bar u_{i\ell}^t(\mathbf{P})}
    {\partial P_m}
    &=
    -\beta_i
    \mathbf 1(\ell=m)
    \mathbf 1(m\neq m_i^t),
    \label{eq:utility-price-derivative}\\
    \frac{\partial \bar v_{i\ell}^t(\mathbf{P})}
    {\partial P_m}
    &=
    -\beta_i\mathbf 1(\ell=m).
    \label{eq:incoming-price-derivative}
\end{align}
Thus, each deterministic utility is affine in $\mathbf{P}$. Because log-sum-exp is convex and $\tau/\beta_i>0$, the first two terms in~\eqref{eq:potential} are convex. The supply term is linear. Hence, $\Phi^t$ is convex.

Consider the contribution of household $i\in\mathcal I^t$ to the potential. Using~\eqref{eq:utility-price-derivative} and the logit probability in~\eqref{eq:logit}, we obtain
\begin{equation}
\label{eq:entry-potential-derivative}
\begin{split}
    &\frac{\partial}{\partial P_m}
    \left[
        \frac{\tau}{\beta_i}
        \log\sum_{\ell\in\mathcal A_i^t}
        \exp\left(\bar u_{i\ell}^t(\mathbf{P})/\tau\right)
    \right] \\
    =&
    \frac{\tau}{\beta_i}
    \frac{
        \exp\left(\bar u_{im}^t(\mathbf{P})/\tau\right)
    }{
        \sum_{\ell\in\mathcal A_i^t}
        \exp\left(\bar u_{i\ell}^t(\mathbf{P})/\tau\right)
    }
    \frac{-\beta_i}{\tau}
    \mathbf 1(m\neq m_i^t)
    =
    -\sigma_{im}^t(\mathbf{P};\mathbf r^t)
    \mathbf 1(m\neq m_i^t).
\end{split}
\end{equation}
Similarly, for an in-migrant $i\in\mathcal J^t$, equations~\eqref{eq:incoming-price-derivative} and~\eqref{eq:bg} give
\begin{align}
\label{eq:incoming-potential-derivative}
    \frac{\partial}{\partial P_m}
    \left[
        \frac{\tau}{\beta_i}
        \log\sum_{\ell\in\mathcal M^t}
        \exp\left(\bar v_{i\ell}^t(\mathbf{P})/\tau\right)
    \right]
    =
    -\mu_{im}^t(\mathbf{P}).
\end{align}
Combining~\eqref{eq:entry-potential-derivative} and~\eqref{eq:incoming-potential-derivative} with the derivative of the supply term yields
\begin{align}
\label{eq:potential-gradient}
    \frac{\partial\Phi^t(\mathbf{P})}{\partial P_m}
    =
    V_m^t
    -
    \sum_{\substack{i\in\mathcal I^t\\m\neq m_i^t}}
    \sigma_{im}^t(\mathbf{P};\mathbf r^t)
    -
    \sum_{i\in\mathcal J^t}
    \mu_{im}^t(\mathbf{P})
    =
    V_m^t-D_m^t(\mathbf{P};\mathbf r^t).
\end{align}
Thus, the demand function in~\eqref{eq:demand} arises directly from the gradient of the potential.

Next, we show the equivalence between the convex program and the market-clearing conditions. Let $\lambda_m \geq 0$ be the multiplier associated with the constraint $P_{\min}-P_m\leq0$. We obtain the Lagrangian
\begin{equation*}
    \mathcal{L}(\mathbf{P},\boldsymbol\lambda) =
    \Phi^t(\mathbf{P}) + \sum_{m\in\mathcal M^t} \lambda_m(P_{\min}-P_m).
\end{equation*}
The KKT conditions are therefore
\begin{align}
    P_m&\geq P_{\min},
    \label{eq:kkt-primal}\\
    \lambda_m&\geq0,
    \label{eq:kkt-dual}\\
    V_m^t-D_m^t(\mathbf{P};\mathbf r^t)-\lambda_m&=0,
    \label{eq:kkt-stationarity}\\
    \lambda_m(P_m-P_{\min})&=0.
    \label{eq:kkt-complementarity}
\end{align}
Eliminating $\lambda_m$ using~\eqref{eq:kkt-stationarity} yields
\begin{equation*}
    D_m^t(\mathbf{P};\mathbf r^t)\leq V_m^t,
    \quad
    P_m\geq P_{\min}, 
    \quad
    (P_m-P_{\min}) [V_m^t-D_m^t(\mathbf{P};\mathbf r^t)] = 0,
\end{equation*}
which are exactly the market-clearing conditions in~\eqref{eq:clearing}. Note that the constraint set $\mathcal{C} = \{\mathbf{P}: \mathbf{P} \geq P_{\min} \mathbf{1} \}$ is a polyhedron and the Slater's condition clearly holds. Therefore, the KKT conditions \eqref{eq:kkt-primal}-\eqref{eq:kkt-complementarity} are necessary and sufficient for optimality. Hence, a feasible price vector solves~\eqref{eq:convex-program} if and only if it is a sorting-equilibrium price vector.

Next, we show existence. Since the constraint set is closed but unbounded, convexity and continuity alone do not guarantee that the minimum is attained. Thus, we need to show that $\Phi^t$ is coercive on the constraint set because a continuous coercive function can attain its minimum on a nonempty closed subset of a finite-dimensional space. That is to show
\[
    \Phi^t(\mathbf P)\longrightarrow+\infty
    \quad\text{whenever}\quad
    \|\mathbf P\|_2\longrightarrow\infty,
    \qquad
    \mathbf P\geq P_{\min}\mathbf 1.
\]
For each $i\in\mathcal I^t$, staying in $m_i^t$ is feasible, and its utility does not depend on the prices of vacant units. Therefore, we have
\begin{equation}
\label{eq:entry-potential-lower-bound}
    \frac{\tau}{\beta_i}
    \log\sum_{\ell\in\mathcal A_i^t}
    \exp\left(\bar u_{i\ell}^t(\mathbf P)/\tau\right)
    \geq
    \frac{\bar u_{i m_i^t}^t}{\beta_i}.
\end{equation}
For each $i\in\mathcal J^t$, choose $k\in\arg\min_{\ell\in\mathcal M^t}P_\ell$. The log-sum-exp is at least as large as any one of its terms, so
\begin{equation}
\label{eq:incoming-potential-lower-bound}
\begin{split}
    \frac{\tau}{\beta_i}
    \log\sum_{\ell\in\mathcal M^t}
    \exp\left(\bar v_{i\ell}^t(\mathbf P)/\tau\right)
    \geq
    \frac{\eta Q_k}{\beta_i}-P_k
    \geq
    \min_{\ell\in\mathcal M^t}
    \frac{\eta Q_\ell}{\beta_i}
    -
    \min_{\ell\in\mathcal M^t}P_\ell.
\end{split}
\end{equation}
Consequently, there exists a finite constant $C^t$, independent of $\mathbf P$, such that
\begin{equation}
\label{eq:potential-lower-bound}
    \Phi^t(\mathbf P)
    \geq
    C^t
    +
    \sum_{m\in\mathcal M^t}V_m^tP_m
    -
    |\mathcal J^t|
    \min_{m\in\mathcal M^t}P_m.
\end{equation}
The lower bound in \eqref{eq:potential-lower-bound} diverges as prices become unbounded. To see this, let
\begin{equation*}
    P_-=\min_{m\in\mathcal M^t}P_m,
    \quad
    P_+=\max_{m\in\mathcal M^t}P_m,
    \quad 
    a=\sum_{m\in\mathcal M^t}V_m^t-|\mathcal J^t|>0,
    \qquad
    \underline{V} = \min_{m\in\mathcal M^t}V_m^t>0.
\end{equation*}
Then we have
\begin{align*}
    \sum_mV_m^tP_m-|\mathcal J^t|P_-
    &=
    aP_-+\sum_mV_m^t(P_m-P_-)\\
    &\geq
    aP_{\min}
    +\min\{a,\underline V\}(P_+-P_{\min}).
\end{align*}
Since the number of communities is finite and prices are bounded below by $P_{\min}$, $\|\mathbf P\|_2\to\infty$ implies $P_+\to\infty$. Hence, \eqref{eq:potential-lower-bound} implies $\Phi^t(\mathbf P) \to +\infty$ as $\|\mathbf P\|_2\to\infty$ within the constraint set. Thus, $\Phi^t$ is coercive. Since it is continuous and the constraint set is nonempty and closed, the convex program \eqref{eq:convex-program} has a minimizer.

Finally, we prove the stated sufficient condition for uniqueness. For $i\in\mathcal I^t$, define the vector of priced-move probabilities
\[
    \widetilde{\boldsymbol\sigma}_i^t(\mathbf{P})
    =
    \left(
        \sigma_{im}^t(\mathbf{P};\mathbf r^t)
        \mathbf 1(m\neq m_i^t)
    \right)_{m\in\mathcal M^t}.
\]
For $i\in\mathcal J^t$, let
\[
    \boldsymbol\mu_i^t(\mathbf{P})
    =
    \left(
        \mu_{im}^t(\mathbf{P})
    \right)_{m\in\mathcal M^t}.
\]
Differentiating~\eqref{eq:potential-gradient} again gives
\begin{equation}
\label{eq:potential-hessian}
\begin{split}
    \nabla^2 \Phi^t(\mathbf{P})
    =
    \sum_{i\in\mathcal I^t}
    \frac{\beta_i}{\tau}
    \left[
        \operatorname{Diag}
        \left(\widetilde{\boldsymbol\sigma}_i^t\right)
        -
        \widetilde{\boldsymbol\sigma}_i^t
        \left(\widetilde{\boldsymbol\sigma}_i^t\right)^\top
    \right]
    +
    \sum_{i\in\mathcal J^t}
    \frac{\beta_i}{\tau}
    \left[
        \operatorname{Diag}
        \left(\boldsymbol\mu_i^t\right)
        -
        \boldsymbol\mu_i^t
        \left(\boldsymbol\mu_i^t\right)^\top
    \right].
\end{split}
\end{equation}
All probabilities in the relevant choice sets are strictly positive because $\tau>0$ and utilities are finite. For any direction
$\mathbf d\in\mathbb R^{|\mathcal M^t|}$,
\begin{equation}
\label{eq:hessian-quadratic-form}
    \mathbf d^\top\nabla^2\Phi^t(\mathbf{P})\mathbf d
    =
    \sum_{i\in\mathcal I^t}
    \frac{\beta_i}{\tau}
    \left[
        \sum_{m\in\mathcal M^t}
        \widetilde{\sigma}_{im}^t d_m^2
        -
        \left(
            \sum_{m\in\mathcal M^t}
            \widetilde{\sigma}_{im}^t d_m
        \right)^2
    \right]
    +
    \sum_{i\in\mathcal J^t}
    \frac{\beta_i}{\tau}
    \left[
        \sum_{m\in\mathcal M^t}
        \mu_{im}^t d_m^2
        -
        \left(
            \sum_{m\in\mathcal M^t}
            \mu_{im}^t d_m
        \right)^2
    \right].
\end{equation}
%Each bracketed expression is a variance and is therefore nonnegative.
The bracketed expression for an in-migrant is the variance of $d_m$ under the probability distribution $\boldsymbol\mu_i^t$. For a school-entry household, it is also a variance after assigning the remaining probability $1-\sum_{m\in\mathcal M^t}\widetilde\sigma_{im}^t$ to the staying option, whose value in the price direction is zero. Therefore, every bracketed expression is nonnegative.

Suppose first that $\mathbf d$ is not constant across communities. Because every in-migrant assigns positive probability to every community in $\mathcal M^t$, the variance in the second sum is strictly positive for any $i\in\mathcal J^t$. Hence,
\[
    \mathbf d^\top\nabla^2\Phi^t(\mathbf{P})\mathbf d>0.
\]
Now suppose that $\mathbf d=c\mathbf 1$ for some $c\neq0$. Let $i\in\mathcal I^t$ be a household with at least one feasible move, and define its total probability of moving by
\[
    q_i^t(\mathbf{P})
    =
    \sum_{\substack{m\in\mathcal M^t\\m\neq m_i^t}}
    \sigma_{im}^t(\mathbf{P};\mathbf r^t).
\]
Because both staying and at least one moving option have positive probability,
\[
    0<q_i^t(\mathbf{P})<1.
\]
The contribution of this household to~\eqref{eq:hessian-quadratic-form} is
\[
    \frac{\beta_i}{\tau}
    c^2q_i^t(\mathbf{P})
    \left[1-q_i^t(\mathbf{P})\right]
    >0.
\]
Thus,
\[
    \mathbf d^\top\nabla^2\Phi^t(\mathbf{P})\mathbf d>0
    \qquad
    \text{for every }\mathbf d\neq\mathbf0.
\]
The Hessian is positive definite, so $\Phi^t$ is strictly convex. Therefore, its minimizer is unique.
\end{proof}

\subsection{Projected Gradient Method For Equilibrium Computation} \label{app:equilibrium.pgd}
We denote the constraint set in \eqref{eq:convex-program} as $\mathcal{C} = \{\mathbf{P}:\mathbf{P}\geq P_{\min}\mathbf 1\}$. Given its simplicity, we can use projected gradient descent method to compute the minimizer of \eqref{eq:convex-program} (i.e., the equilibrium price vector).
Projection onto this set can be computed componentwise:
\begin{equation*}
    \Pi_{\mathcal C}(\mathbf x) = \max\{P_{\min}\mathbf 1, \mathbf{x}\}.
\end{equation*}
Using~\eqref{eq:potential-gradient}, one projected-gradient step is
\begin{equation}
\label{eq:projected-price-update}
\begin{split}
    \mathbf{P}^{(s+1)} =
    \Pi_{\mathcal C}
    \left(
        \mathbf{P}^{(s)} -\zeta \nabla \Phi^t(\mathbf{P}^{(s)})
    \right)
    =
    \max \left\{
        P_{\min}\mathbf 1, \mathbf{P}^{(s)} + \zeta
        \left[
            \mathbf D^t(\mathbf{P}^{(s)};\mathbf r^t) - \mathbf V^t
        \right]
    \right\},
\end{split}
\end{equation}
where $\zeta>0$ is the step size. The update raises the price of a community when demand exceeds its vacancies. It lowers the price when vacancies exceed demand, subject to the price floor.

Define the projected-gradient residual as
\begin{equation}
\label{eq:projected-gradient-residual}
    \operatorname{res}_\zeta (\mathbf{P}) =
    \frac{1}{\zeta}
    \left[
        \mathbf{P} - \Pi_{\mathcal C}
        \left(
            \mathbf{P} - \zeta \nabla \Phi^t(\mathbf{P})
        \right)
    \right].
\end{equation}
The residual satisfies $\operatorname{res}_\zeta(\mathbf{P}) = \mathbf{0}$ if and only if $\mathbf{P}$ satisfies the KKT conditions and hence the market-clearing conditions. Thus, we propose Algorithm~\ref{alg:sorting} to compute the sorting equilibrium. The convergence result is established in Proposition~\ref{prop:projected-gradient-convergence} .

\begin{algorithm}[H]
\caption{Compute the lower-level sorting equilibrium}
\label{alg:sorting}
\begin{algorithmic}[1]
\REQUIRE School quality $\mathbf r^t$, vacancies $\mathbf V^t$, groups $\mathcal I^t$ and $\mathcal J^t$, feasible initial prices $\mathbf{P}^{(0)}$, price floor $P_{\min}$, fixed step size $\zeta$, and tolerance $\varepsilon_{\rm tol}$
\STATE Set $s\leftarrow0$
\REPEAT
\FOR{$m\in\mathcal M^t$}
\STATE Compute $D_m^t(\mathbf{P}^{(s)};\mathbf r^t)$ using
\eqref{eq:logit}--\eqref{eq:demand}
\ENDFOR
\STATE Update $\mathbf{P}^{(s+1)}$ using
\eqref{eq:projected-price-update}
\STATE Set
$\mathrm{res}^{(s)}
\leftarrow
\|\mathbf{P}^{(s+1)}-\mathbf{P}^{(s)}\|_\infty/\zeta$
\STATE Set $s\leftarrow s+1$
\UNTIL{$\mathrm{res}^{(s-1)}\leq\varepsilon_{\rm tol}$}
\STATE Set $\mathbf{P}^{t\star}\leftarrow\mathbf{P}^{(s)}$
\STATE Compute $\boldsymbol\sigma^{t\star}$ and
$\boldsymbol\mu^{t\star}$ using
\eqref{eq:logit}--\eqref{eq:bg}
\RETURN $\mathbf{P}^{t\star}$,
$\boldsymbol\sigma^{t\star}$,
$\boldsymbol\mu^{t\star}$
\end{algorithmic}
\end{algorithm}

\begin{proposition}[Convergence of projected gradient descent]
\label{prop:projected-gradient-convergence}
Suppose that the convex program~\eqref{eq:convex-program} has a
minimizer and that $\nabla\Phi^t$ is $L$-Lipschitz continuous on
$\mathcal C$ for some $L>0$. Starting from any
$\mathbf{P}^{(0)}\in\mathcal C$, let the fixed step size satisfy
\[
    0<\zeta\leq\frac{1}{L}.
\]
Then the sequence generated by~\eqref{eq:projected-price-update}
converges to a minimizer of~\eqref{eq:convex-program}. Moreover, for
any minimizer $\mathbf{P}^{t\star}$ and any $S\geq1$,
\begin{equation}
\label{eq:projected-gradient-rate}
    \Phi^t(\mathbf{P}^{(S)})
    -
    \Phi^t(\mathbf{P}^{t\star})
    \leq
    \frac{
        \|\mathbf{P}^{(0)}-\mathbf{P}^{t\star}\|_2^2
    }{
        2\zeta S
    }.
\end{equation}
If the minimizer is unique, then
\[
    \mathbf{P}^{(S)}
    \longrightarrow
    \mathbf{P}^{t\star}.
\]
\end{proposition}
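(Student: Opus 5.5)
The plan is to run the standard analysis of projected gradient descent for a convex function with $L$-Lipschitz gradient over the closed convex set $\mathcal C=\{\mathbf P:\mathbf P\geq P_{\min}\mathbf 1\}$. Convexity of $\Phi^t$ comes from the proof of Proposition~\ref{prop:exist}, and the iteration \eqref{eq:projected-price-update} is exactly $\mathbf P^{(s+1)}=\Pi_{\mathcal C}(\mathbf P^{(s)}-\zeta\nabla\Phi^t(\mathbf P^{(s)}))$, using the gradient formula \eqref{eq:potential-gradient}.

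\textbf{Step 1 (descent and the three-point inequality).} Write $\mathbf g^{(s)}=\operatorname{res}_\zeta(\mathbf P^{(s)})$, so $\mathbf P^{(s+1)}=\mathbf P^{(s)}-\zeta\mathbf g^{(s)}$. Two facts feed in:
\begin{itemize}
\item The projection characterization $\langle \mathbf P^{(s)}-\zeta\nabla\Phi^t(\mathbf P^{(s)})-\mathbf P^{(s+1)},\mathbf z-\mathbf P^{(s+1)}\rangle\leq 0$ for all $\mathbf z\in\mathcal C$.
\item The descent lemma $\Phi^t(\mathbf y)\leq\Phi^t(\mathbf x)+\langle\nabla\Phi^t(\mathbf x),\mathbf y-\mathbf x\rangle+\frac L2\|\mathbf y-\mathbf x\|^2$, which follows from $L$-smoothness on the convex set $\mathcal C$.
\end{itemize}
Combining these with convexity, $\Phi^t(\mathbf x)+\langle\nabla\Phi^t(\mathbf x),\mathbf z-\mathbf x\rangle\leq\Phi^t(\mathbf z)$, and with $\zeta\leq 1/L$, I will derive for every $\mathbf z\in\mathcal C$
\[
\Phi^t(\mathbf P^{(s+1)})\leq\Phi^t(\mathbf z)+\frac{1}{2\zeta}\left(\|\mathbf P^{(s)}-\mathbf z\|^2-\|\mathbf P^{(s+1)}-\mathbf z\|^2\right).
\]
Taking $\mathbf z=\mathbf P^{(s)}$ gives monotone decrease, $\Phi^t(\mathbf P^{(s+1)})\leq\Phi^t(\mathbf P^{(s)})-\frac{1}{2\zeta}\|\mathbf P^{(s+1)}-\mathbf P^{(s)}\|^2$.

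\textbf{Step 2 (rate).} Set $\mathbf z=\mathbf P^{t\star}$ and sum for $s=0,\dots,S-1$. The distance terms telescope, so $\sum_{s=1}^{S}(\Phi^t(\mathbf P^{(s)})-\Phi^t(\mathbf P^{t\star}))\leq\|\mathbf P^{(0)}-\mathbf P^{t\star}\|^2/(2\zeta)$. By the monotone decrease from Step~1, the last term is the smallest, which gives \eqref{eq:projected-gradient-rate}.

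\textbf{Step 3 (convergence of iterates).} This is the main obstacle, because the rate alone gives only function-value convergence. From the Step~1 inequality with $\mathbf z$ any minimizer $\mathbf P^\star$, we get $\|\mathbf P^{(s+1)}-\mathbf P^\star\|\leq\|\mathbf P^{(s)}-\mathbf P^\star\|$, so the iterates are Fej\'er monotone with respect to the solution set. Hence they are bounded and have a cluster point $\bar{\mathbf P}\in\mathcal C$. Because $\Phi^t$ is continuous and $\Phi^t(\mathbf P^{(s)})\to\min\Phi^t$, the cluster point $\bar{\mathbf P}$ is itself a minimizer. Applying Fej\'er monotonicity with $\mathbf P^\star=\bar{\mathbf P}$ shows that $\|\mathbf P^{(s)}-\bar{\mathbf P}\|$ is nonincreasing and has a subsequence tending to zero, so the whole sequence converges to $\bar{\mathbf P}$.

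If the minimizer is unique, for instance under the strict-convexity condition of Proposition~\ref{prop:exist}, then $\bar{\mathbf P}=\mathbf P^{t\star}$ and $\mathbf P^{(S)}\to\mathbf P^{t\star}$. By the KKT equivalence in Proposition~\ref{prop:exist}, this limit is a sorting-equilibrium price vector.
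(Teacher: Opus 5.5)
Your proposal is correct and follows essentially the same route as the paper. Projection optimality with the three-point identity, convexity, and the descent lemma give the one-step bound; telescoping together with monotone decrease of $\Phi^t(\mathbf P^{(s)})$ gives \eqref{eq:projected-gradient-rate}; and Fej\'er monotonicity with respect to the minimizer set gives convergence of the iterates. The differences are cosmetic. You state the one-step inequality for a general $\mathbf z\in\mathcal C$ and get monotone decrease by setting $\mathbf z=\mathbf P^{(s)}$, where the paper applies the projection inequality directly. You also spell out the final Fej\'er-sequence argument that the paper cites as standard, while the paper additionally derives an explicit Lipschitz constant that the hypothesis does not require.
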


\begin{proof}
We first give an explicit Lipschitz bound for the gradient. For any nonnegative vector $\mathbf p$ satisfying
\begin{equation*}
    q:=\sum_m p_m\leq1
\end{equation*}
and any vector $\mathbf d$, the Cauchy--Schwarz inequality gives
\begin{equation*}
    \left(\sum_m p_md_m\right)^2
    \leq
    \left(\sum_m p_m\right)
    \left(\sum_m p_md_m^2\right)
    =
    q\sum_m p_md_m^2
    \leq
    \sum_m p_md_m^2.
\end{equation*}
Therefore,
\begin{equation*}
    0
    \leq
    \mathbf d^\top
    \left[
        \operatorname{Diag}(\mathbf p)
        -
        \mathbf p\mathbf p^\top
    \right]
    \mathbf d
    \leq
    \sum_m p_m d_m^2
    \leq
    \|\mathbf d\|_2^2.
\end{equation*}
For every $i\in\mathcal I^t$, the vector $\widetilde{\boldsymbol\sigma}_i^t$ in \eqref{eq:potential-hessian} is a sub-probability vector, while for every $i\in\mathcal J^t$, the vector $\boldsymbol\mu_i^t$ is a probability vector. Hence, every matrix appearing in \eqref{eq:potential-hessian} is positive semidefinite and has spectral norm at most one. It follows that
\begin{equation*}
    \|\nabla^2\Phi^t(\mathbf P)\|_2
    \leq
    \frac{1}{\tau}
    \left(
        \sum_{i\in\mathcal I^t}\beta_i
        +
        \sum_{i\in\mathcal J^t}\beta_i
    \right)
    =:L_0.
\end{equation*}
Thus, $\nabla\Phi^t$ is globally Lipschitz continuous. In particular, when $L_0>0$, one may take
\begin{equation}
\label{eq:gradient-lipschitz-bound}
    L=L_0
    =
    \frac{1}{\tau}
    \left(
        \sum_{i\in\mathcal I^t}\beta_i
        +
        \sum_{i\in\mathcal J^t}\beta_i
    \right).
\end{equation}
If $L_0=0$, then $\nabla\Phi^t$ is constant, and any $L>0$ is a valid Lipschitz constant.

Let $\mathbf{P}^{t\star}$ be any minimizer. Since $\mathbf{P}^{(0)}\in\mathcal C$ and every update is a projection onto $\mathcal C$, all iterates belong to $\mathcal C$. The optimality condition for projection onto the closed convex set $\mathcal C$ gives, for every $\mathbf P\in\mathcal C$,
\begin{equation}
\label{eq:projection-optimality}
    \left\langle
        \mathbf{P}^{(s+1)}
        -
        \left[
            \mathbf{P}^{(s)}
            -
            \zeta\nabla\Phi^t(\mathbf{P}^{(s)})
        \right],
        \mathbf P-\mathbf{P}^{(s+1)}
    \right\rangle
    \geq0.
\end{equation}
Setting $\mathbf P=\mathbf{P}^{t\star}$ and using the identity
\begin{equation*}
    2\langle \mathbf a-\mathbf b,\mathbf c-\mathbf a\rangle
    =
    \|\mathbf b-\mathbf c\|_2^2
    -
    \|\mathbf a-\mathbf c\|_2^2
    -
    \|\mathbf a-\mathbf b\|_2^2
\end{equation*}
gives
\begin{equation}
\label{eq:projection-bound}
\begin{split}
    \left\langle
        \nabla\Phi^t(\mathbf{P}^{(s)}),
        \mathbf{P}^{(s+1)}-\mathbf{P}^{t\star}
    \right\rangle 
    \leq
    \frac{1}{2\zeta}
    \left[
        \|\mathbf{P}^{(s)}-\mathbf{P}^{t\star}\|_2^2
        -
        \|\mathbf{P}^{(s+1)}-\mathbf{P}^{t\star}\|_2^2
        -
        \|\mathbf{P}^{(s+1)}-\mathbf{P}^{(s)}\|_2^2
    \right].
\end{split}
\end{equation}
Also by convexity, we have
\begin{equation*}
    \Phi^t(\mathbf{P}^{(s)})
    -
    \Phi^t(\mathbf{P}^{t\star})
    \leq
    \left\langle
        \nabla\Phi^t(\mathbf{P}^{(s)}),
        \mathbf{P}^{(s)}-\mathbf{P}^{t\star}
    \right\rangle.
\end{equation*}
Combining this inequality with the descent lemma yields
\begin{equation}
\label{eq:descent-bound}
\begin{split}
    \Phi^t(\mathbf{P}^{(s+1)})
    -
    \Phi^t(\mathbf{P}^{t\star})
    \leq
    \left\langle
        \nabla\Phi^t(\mathbf{P}^{(s)}),
        \mathbf{P}^{(s+1)}-\mathbf{P}^{t\star}
    \right\rangle
    +
    \frac{L}{2}
    \|\mathbf{P}^{(s+1)}-\mathbf{P}^{(s)}\|_2^2.
\end{split}
\end{equation}
Combining~\eqref{eq:projection-bound} and \eqref{eq:descent-bound} gives
\begin{equation}
\label{eq:one-step-convergence}
\begin{split}
    \Phi^t(\mathbf{P}^{(s+1)})
    -
    \Phi^t(\mathbf{P}^{t\star}) 
    \leq
    \frac{1}{2\zeta}
    \left[
        \|\mathbf{P}^{(s)}-\mathbf{P}^{t\star}\|_2^2
        -
        \|\mathbf{P}^{(s+1)}-\mathbf{P}^{t\star}\|_2^2
    \right] 
    -
    \left(
        \frac{1}{2\zeta}-\frac{L}{2}
    \right)
    \|\mathbf{P}^{(s+1)}-\mathbf{P}^{(s)}\|_2^2.
\end{split}
\end{equation}
Since $\zeta\leq1/L$, the final term is nonpositive and can be removed from the upper bound. Summing over $s=0,\ldots,S-1$ yields
\[
    \sum_{s=0}^{S-1}
    \left[
        \Phi^t(\mathbf{P}^{(s+1)})
        -
        \Phi^t(\mathbf{P}^{t\star})
    \right]
    \leq
    \frac{
        \|\mathbf{P}^{(0)}-\mathbf{P}^{t\star}\|_2^2
    }{
        2\zeta
    }.
\]

Next, we show that the objective values are nonincreasing. Setting $\mathbf P=\mathbf{P}^{(s)}$ in \eqref{eq:projection-optimality} gives
\begin{equation*}
    \left\langle
        \nabla\Phi^t(\mathbf{P}^{(s)}),
        \mathbf{P}^{(s+1)}-\mathbf{P}^{(s)}
    \right\rangle
    \leq
    -\frac{1}{\zeta}
    \|\mathbf{P}^{(s+1)}-\mathbf{P}^{(s)}\|_2^2,
\end{equation*}
which implies
\begin{equation*}
    \Phi^t(\mathbf{P}^{(s+1)})
    \leq
    \Phi^t(\mathbf{P}^{(s)})
    -
    \left(
        \frac{1}{\zeta}-\frac{L}{2}
    \right)
    \|\mathbf{P}^{(s+1)}-\mathbf{P}^{(s)}\|_2^2.
\end{equation*}
Since $\zeta\leq1/L$, the coefficient is nonnegative. Thus,
$\{\Phi^t(\mathbf{P}^{(s)})\}$ is nonincreasing. Consequently,
\begin{equation*}
\begin{split}
    S\left[
        \Phi^t(\mathbf{P}^{(S)})
        -
        \Phi^t(\mathbf{P}^{t\star})
    \right]
    \leq
    \sum_{s=0}^{S-1}
    \left[
        \Phi^t(\mathbf{P}^{(s+1)})
        -
        \Phi^t(\mathbf{P}^{t\star})
    \right] 
    \leq
    \frac{
        \|\mathbf{P}^{(0)}-\mathbf{P}^{t\star}\|_2^2
    }{
        2\zeta
    },
\end{split}
\end{equation*}
which proves~\eqref{eq:projected-gradient-rate}.

Finally, we show convergence of the iterates. Rearranging~\eqref{eq:one-step-convergence} gives
\begin{equation*}
    \|\mathbf P^{(s+1)}-\mathbf P^{t\star}\|_2^2
    \leq
    \|\mathbf P^{(s)}-\mathbf P^{t\star}\|_2^2 
    -
    2\zeta
    \left[
        \Phi^t(\mathbf P^{(s+1)})
        -
        \Phi^t(\mathbf P^{t\star})
    \right] 
    -
    (1-\zeta L)
    \|\mathbf P^{(s+1)}-\mathbf P^{(s)}\|_2^2.
\end{equation*}
Since $\zeta\leq1/L$ and $\Phi^t(\mathbf{P}) \geq \Phi^t(\mathbf{P}^*) \ \forall \mathbf{P} \in \mathcal{C}$, both subtracted terms are nonnegative and we have
\begin{equation*}
    \|\mathbf P^{(s+1)}-\mathbf P^{t\star}\|_2
    \leq
    \|\mathbf P^{(s)}-\mathbf P^{t\star}\|_2.
\end{equation*}
Thus, the sequence $\{P^{(s)}\}$ is Fej\'er monotone with respect to the set of minimizers and is therefore bounded and has a convergent subsequence that belongs to $\mathcal{C}$ since $\mathcal{C}$ is closed.
Moreover, by \eqref{eq:projected-gradient-rate} we have
\[
    \Phi^t(\mathbf P^{(S)})
    \longrightarrow
    \min_{\mathbf P\in\mathcal C}\Phi^t(\mathbf P).
\]
Then, the limit of any convergent subsequence converges to the minimum of $\Phi^t$ since $\Phi^t$ is continuous.
%Since the sequence is bounded, it has a convergent subsequence. The limit of any convergent subsequence belongs to $\mathcal C$ and attains the minimum of $\Phi^t$ because $\mathcal C$ is closed and $\Phi^t$ is continuous. Thus, every subsequential limit is a minimizer. 
The standard convergence theorem for Fej\'er-monotone sequences in finite-dimensional spaces then implies that the full sequence $\mathbf P^{(s)}$ converges to a minimizer. If the minimizer is unique, the limit is $\mathbf P^{t\star}$.

%Since $\mathcal C$ is closed and $\Phi^t$ is continuous, every cluster point of the sequence is a minimizer. The standard convergence theorem for Fej\'er-monotone sequences in finite-dimensional spaces therefore implies that $\mathbf P^{(s)}$ converges to a minimizer. If the minimizer is unique, the limit is $\mathbf P^{t\star}$.
\end{proof}

\subsection{From Equilibrium Probabilities to Assignments}
\label{app:equilibrium.assignment}

The sorting equilibrium $(\mathbf P^{t\star},\boldsymbol\sigma^{t\star}, \boldsymbol\mu^{t\star})$ specifies household choice probabilities and clears the vacancy market in expectation. 
We apply a capacity-feasible realization procedure to determine a discrete residential assignment for every household in each period. The enrollment and school composition are computed based on the assignment result. Throughout the procedure, the prices $\mathbf P^{t\star}$ remain fixed.

\paragraph{Step 1: Initial Realization.} Each household in group $\mathcal{I}$ and $\mathcal{J}$ independently sample a tentative destination from its equilibrium probabilities:
\begin{equation*}
\label{eq:initial-mover-draw}
    a_i^{t,0}
    \sim
    \operatorname{Categorical}
    \left(
        \{\sigma_{im}^{t\star}\}_{m\in\mathcal A_i^t}
    \right),
    \ i\in\mathcal I^t, 
    \qquad
    a_j^{t,0}
    \sim
    \operatorname{Categorical}
    \left(
        \{\mu_{jm}^{t\star}\}_{m\in\mathcal M^t}
    \right),
    \ j\in\mathcal J^t.
\end{equation*}
They form an assignment profile $\mathbf{a}^t$.

\paragraph{Step 2: Local Overflow Rationing}
For an assignment profile $\mathbf a^t$, define the fresh-arrival set of community $m$ as
\begin{equation*}
\label{eq:fresh-arrival-set}
    \mathcal{F}_m^t(\mathbf a^t) = \{i\in\mathcal I^t: a_i^t=m,\ m\neq m_i^t\} 
    \cup 
    \{j\in\mathcal J^t: a_j^t=m\}.
\end{equation*}
For each community $m$, let $x_m^t = |\mathcal F_m^t(\mathbf a^{t})|$ denote its realized fresh demand, and define its overflow by $\xi_m^t = (x_m^t-V_m^t)^+$. If $\xi_m^t=0$, all tentative fresh arrivals to $m$ are accepted. If $\xi_m^t > 0$, the procedure selects a subset
\begin{equation*}
    \mathcal R_m^t \subseteq \mathcal F_m^t(\mathbf a^{t}), \quad 
    |\mathcal R_m^t|=\xi_m^t
\end{equation*}
uniformly among all subsets of this size. The selected households lose their tentative assignments and enter a pending pool. We denote the sets of displaced households as 
\begin{equation*}
    \widetilde{\mathcal{I}}^{t} = \left( \bigcup_m \mathcal{R}_m^t \right) \cap \mathcal{I}^t, 
    \quad 
    \widetilde{\mathcal{J}}^{t} = \left( \bigcup_m \mathcal{R}_m^t \right) \cap \mathcal{J}^t.
\end{equation*}
After placing the households in $\mathcal{F}_m^t(\mathbf a^t) \backslash \mathcal{R}_m^t$ for all $m$, the residual vacancies in community becomes 
\begin{equation*}
    \widetilde{V}_m^t = V_m^t - | \mathcal{F}_m^t(\mathbf a^t) \backslash \mathcal{R}_m^t |.
\end{equation*}

\paragraph{Step 3: Reassigning Displaced Households.}
The procedure processes households in $\widetilde{I}^t$ and $\widetilde{J}^t$ in uniformly random order. For a pending household $i \in \widetilde{I}^t$, define the currently available community set  $\widetilde{\mathcal{A}}_i^t = \{m_i^t\} \cup \{m \in \mathcal{M}^t: \widetilde{V}_m^t > 0 \}$. The household draws a new destination using its equilibrium probabilities conditioned on current
availability:
\begin{equation}
\label{eq:conditional-mover-draw}
    \Pr(a_i^t=m) =
    \frac{\sigma_{im}^{t\star}}
    {\sum_{\ell \in \widetilde{\mathcal{A}}_i^t}
        \sigma_{i\ell}^{t\star}},
    \quad
    m \in \widetilde{\mathcal A}_i^t.
\end{equation}
If $a_i^t \neq m_i^t$, the procedure decreases
$\widetilde V_{a_i^t}^t$ by one. If $a_i^t=m_i^t$, no vacancy is consumed. Thus, every displaced mover either reaches another community with available capacity or remains in its origin.
Similarly, for a pending household $j \in \widetilde{J}^t$, define the currently available community set $\widetilde{\mathcal{M}}_j^t = \{ m \in \mathcal{M}^t: \widetilde{V}_m^t > 0\}$. The household draws a new destination using its equilibrium probabilities conditioned on current availability:
\begin{equation}
\label{eq:conditional-migrant-draw}
    \Pr(a_j^t=m)
    =
    \frac{\mu_{jm}^{t\star}}
    {\sum_{\ell\in\widetilde{\mathcal M}_j^t}
        \mu_{j\ell}^{t\star}},
    \qquad
    m\in\widetilde{\mathcal M}_j^t.
\end{equation}
After assigning $j$, the procedure decreases
$\widetilde V_{a_j^t}^t$ by one. 

Availability is recomputed before assigning the next household. The procedure satisfies capacity-feasible and discrete placement for every realization. 
%, which is summarized in Algorithm~\ref{}.
The resulting assignment result update household locations and further determines the enrolled sets $\mathcal L_k^t$, enrollment $N_k^t$, and peer composition $S_k^t$.

\section{Government Decision Process and Algorithms}\label{app:gov-decision}

\subsection{Partially Observed Decision Process}
Let
\[
    \mathcal{G}=\langle \mathcal{X}, \Delta_K, p, R, \mathcal{Q}, \mathcal{O}, \gamma \rangle
\]
denote the government decision process. The full state $x^t\in\mathcal{X}$ contains household types, ages and locations; school quality; housing occupancy and vacancies; the active groups; and the previous allocation required to evaluate the smoothing penalty. Given action $\mathbf{b}^t$, the transition kernel $p(x^{t+1} | x^t, \mathbf{b}^t)$ includes sorting, assignment, school production, and demographic turnover.

The government observes $q^t=\mathcal{O}(x^t)$ as defined in~\eqref{eq:observation}. In the implementation, this observation contains school quality, community occupancy relative to capacity, community mean income and ability, and entry-group features. It does not expose the full household microstate. The policy is therefore observation-based, with Dirichlet concentration parameters $\boldsymbol\kappa_\phi(q^t) > \mathbf{0}$. The reward is
\begin{equation}
\label{eq:period-reward}
    R^t=\mathcal W_\epsilon(\boldsymbol\rho^{t,+})
    -\lambda\|\mathbf{b}^t-\mathbf{b}^{t-1}\|_2^2.
\end{equation}
The expectation in~\eqref{eq:gov-objective} is over demographic transitions, household taste shocks and realized assignments, and stochastic policy actions.

\subsection{Training Procedure}

We summarize the complete training procedure in Algorithm~\ref{alg:system}.

\begin{algorithm}[H]
\caption{Dynamic system and government-policy training}
\label{alg:system}
\begin{algorithmic}[1]
\REQUIRE Initial population $\mathcal H^0$, quality $\mathbf r^0$, horizon
$T$, budgets $\{B^t\}$, and parameters $\gamma,\epsilon,\lambda$
\STATE Initialize policy $\pi_\phi$, value function $V_\omega$, and
$\mathbf{b}^{-1}=\mathbf{0}$
\FOR{each training episode}
\STATE Reset the population and school qualities
\FOR{$t=0,\ldots,T-1$}
\STATE Construct $q^t=\mathcal{O}(x^t)$ from school and community summaries
\STATE Sample $\mathbf{b}^t\sim\pi_\phi(\cdot\mid q^t)$ and set
$g_k^t = b_k^t B^t$
\STATE Call Algorithm~\ref{alg:sorting} using inherited quality $\mathbf r^t$
\STATE Compute enrollment $N_k^t$ and composition $S_k^t$
\STATE Apply~\eqref{eq:empty-school} and set
$h_k^t=g_k^t/\widetilde N_k^t$
\STATE Update quality to $\mathbf r^{t+1}$ using~\eqref{eq:quality}
\STATE Compute $\boldsymbol\rho^{t,+}$, $\mathbf o^{t,+}$, and reward~\eqref{eq:period-reward}
\STATE Process turnover to form $x^{t+1}$
\STATE Store $(q^t,\mathbf{b}^t,R^t,q^{t+1})$
\ENDFOR
\STATE Update $\pi_\phi,V_\omega$ with PPO using discounted episode returns
\ENDFOR
\RETURN Trained policy $\pi_\phi$
\end{algorithmic}
\end{algorithm}

\section{Experiment Setting } \label{app:experiment}

\subsection{Data Source} \label{app:experiment.data}

We use the China Education Panel Survey (CEPS) to discipline the joint distribution of household characteristics in the simulation
\cite{ceps2015}.  The processed calibration sample contains 16,880 observations.
For each observation, the calibration artifact records household income, bthe child's ability, parental education level.  Thus, the survey is not used as a static set of transitions for reinforcement learning.  Instead, it determines the population distribution from which households are generated, while the state--action--next-state transitions used for policy learning are produced online by the dynamic simulator.

The empirical joint probability array has dimensions
% \begin{equation}
%   \begin{aligned}
%   &37\;\text{income levels}\times5\;\text{ability levels}
%    \times9\;\text{parental-education levels}.
%   \end{aligned}
% \end{equation}
\begin{center}
    37 income levels $\times$ 5 ability levels $\times$ 9 parental-education levels.
\end{center}
Let $Y$, $E$ and $L$denote the corresponding discrete random variables.  The normalized empirical frequency distribution is denoted by
$\widehat p(y,e,z)$ and satisfies
\begin{equation}
  \sum_{y,e,z,j,a}\widehat p(y,e,z)=1.
\end{equation}
Occupation and assets are retained in the calibration artifact but do not enter the current household state.  We therefore integrate out these two dimensions:
\begin{equation}
  p_{yez}
  =\Pr(Y=y,E=e,Z=z)
  =\sum_{j\in\mathcal J}\sum_{a\in\mathcal A}
    \widehat p(y,e,z,j,a).
  \label{eq:ceps-marginal}
\end{equation}
The resulting model type space contains $37 \times 5 \times 9 = 1665$ joint income--ability--education types.  Drawing types jointly from \eqref{eq:ceps-marginal}, rather than drawing the three characteristics from independent marginals, preserves their empirical dependence and, in particular, the income--ability sorting present in the survey.

Table~\ref{tab:ceps-variables} summarizes the variables entering the model.
Income has 37 observed support points ranging from 1 to 20 in the units of the processed survey variable.  The five academic-performance categories are mapped to the normalized ability scores
\begin{equation}
  e\in\{0.0430,\;0.1828,\;0.4528,\;0.7801,\;0.9670\}.
\end{equation}
Parental education consists, in ascending order, of no schooling, primary
school, junior high school, technical secondary/vocational school,
vocational high school, senior high school, junior college, university, and
postgraduate education or above.  These nine ordered categories are mapped
linearly onto
\begin{equation}
  z \in \{0,0.125,0.25,\ldots,1\}.
\end{equation}

\begin{table}[H]
  \centering
  \begin{tabular}{@{}llp{0.27\textwidth}p{0.38\textwidth}@{}}
    \toprule
    Variable & Symbol & Support & Role in the model \\
    \midrule
    Household income
      & $y_i$ & 37 levels, $[1,20]$
      & Price sensitivity $\beta_i=\beta_0/y_i$ \\
    Child ability
      & $e_i$ & 5 normalized levels in $[0,1]$
      & Peer composition and learning output \\
    Parental education
      & $z_i$ & 9 ordered levels in $[0,1]$
      & School-quality preference $\alpha_i=\alpha(1+\xi z_i)$ \\
    \bottomrule
  \end{tabular}
  \caption{CEPS variables used to construct the simulated household population.}
  \label{tab:ceps-variables}
\end{table}

\begin{comment}
\red{Parental education introduces observed heterogeneity in the value attached to
school quality.  In the baseline specification,
\begin{equation}
  \alpha_i=\alpha\bigl(1+\xi \ell_i\bigr),
  \qquad \alpha=1,\quad \xi=1,
  \label{eq:education-preference-data}
\end{equation}
so that the highest parental-education group places twice as much weight on
school quality as the lowest group, holding income and other characteristics
fixed.  Ability does not enter residential utility directly; it affects peer
composition and the educational production function.  This distinction
prevents the residential sorting mechanism from mechanically assigning a
direct taste for school quality to high-ability children.} 

\red{NEED DISCUSSION:  remove to Appendix A.1?}
\end{comment}

\subsection{Metric Definitions} \label{app:experiment.metrics}
This section provides the definitions of all evaluation metrics used in Section~8. These metrics quantify educational efficiency, equity, and socioeconomic sorting in the simulated education resource allocation environment. All metrics are computed based on the realized educational outcomes after household decisions and school quality evolution.

\subsubsection{Mean  Access(MA): }

The mean educational access measures the average quality received by students:

\begin{equation}
\bar{\rho}^t
=
\frac{1}{N^t}\sum_i\rho_i^t
=
\frac{\sum_{k=1}^{K}N_k^t r_k^t}{N^t}.
\end{equation}

Higher values indicate better average educational access. This metric is
equivalent to $W_0(\boldsymbol{\rho}^t)$.

\subsubsection{Gini Access(GA): }

Educational inequality is measured using the Gini coefficient of accessed
quality:

\begin{equation}
G_{\rho}^t
=
\frac{
\sum_{k=1}^{K}\sum_{\ell=1}^{K}
N_k^tN_\ell^t|r_k^t-r_\ell^t|
}
{2(N^t)^2\bar{\rho}^t}.
\end{equation}

The value satisfies $G_{\rho}^t\in[0,1]$. A value of zero indicates that all
students access the same educational quality, while larger values indicate
greater inequality. When $\bar{\rho}^t=0$, the implementation returns zero.

\subsubsection{Income Access Gap(IA): }

To measure socioeconomic differences in educational access, students are ranked
according to their income $y_i$. The lowest and highest income quartiles are
denoted as $\mathcal I_{Q_1}^t$ and $\mathcal I_{Q_4}^t$, respectively.

The mean accessed quality of each quartile group is:

\begin{equation}
\bar{\rho}_{Q_q}^t
=
\frac{4}{N^t}
\sum_{i\in\mathcal I_{Q_q}^t}\rho_i^t,
\qquad q\in\{1,4\}.
\end{equation}

The income-based access gap is defined as:

\begin{equation}
\Delta_{\rho,y}^t
=
\bar{\rho}_{Q_4}^t-\bar{\rho}_{Q_1}^t .
\end{equation}

A positive value indicates that high-income students access higher-quality
schools, while a value close to zero indicates similar educational access
between income groups.

The implementation handles discrete income types by splitting weights when a
quartile boundary falls inside an income category. The metric is undefined only
when no students are enrolled.

\subsubsection{Dissimilarity Income: }

Income segregation across schools is measured using the Duncan dissimilarity
index. Let $L_k^t$ and $H_k^t$ denote the numbers of low- and high-income
students enrolled in school $k$, respectively. The index is defined as:

\begin{equation}
D_y^t
=
\frac{1}{2}
\sum_{k=1}^{K}
\left|
\frac{L_k^t}{L^t}
-
\frac{H_k^t}{H^t}
\right|.
\end{equation}

The index satisfies $D_y^t\in[0,1]$. A value of zero indicates identical
distributions of income groups across schools, while larger values indicate
stronger income segregation.

\subsubsection{Income--Quality Correlation(IQ): }

To quantify the relationship between socioeconomic status and educational
quality, we compute the population-weighted Pearson correlation between
community income and assigned school quality.

For community $m$, let $n_m^t$ denote the realized population,
$\bar y_m^t$ denote the average income, and $r_{d(m)}^t$ denote the quality of
the assigned school. The population weight is defined as:

\begin{equation}
\omega_m^t
=
\frac{n_m^t}
{\sum_j n_j^t}.
\end{equation}

The weighted correlation coefficient is:

\begin{equation}
\operatorname{Corr}_y^t
=
\frac{
\sum_m\omega_m^t
(\bar y_m^t-\mu_y^t)
(r_{d(m)}^t-\mu_r^t)
}
{
\sqrt{\sum_m\omega_m^t(\bar y_m^t-\mu_y^t)^2}
\sqrt{\sum_m\omega_m^t(r_{d(m)}^t-\mu_r^t)^2}
}.
\end{equation}

A positive value indicates that higher-income communities tend to access
higher-quality schools, while a value close to zero indicates a weaker
relationship between income and educational quality.

\subsubsection{Human Capital(HC): }

Following Eq.~(11) of the paper, individual educational output is defined as:

\begin{equation}
o_i^t=\rho_i^t e_i^\psi .
\end{equation}

The aggregate human capital is calculated as:

\begin{equation}
H_t
=
\sum_{i:c_i\text{ is of school age}}
\rho_i^t e_i^\psi .
\end{equation}

Equivalently, when aggregated by ability type and school:

\begin{equation}
H_t
=
\sum_e\sum_{k=1}^{K}
N_{ek}^t r_k^t e^\psi .
\end{equation}

A larger value indicates higher aggregate educational output. Since this is a
total measure, it depends jointly on enrollment size, school quality, and the
matching between student ability and educational resources.

\subsection{Sensitivity Analysis Experiments}
\label{app:experiment.sensitivity}

To examine the robustness of the proposed framework, we conduct a series of one-factor-at-a-time sensitivity analyses. In each experiment, one parameter is varied while keeping all other parameters fixed at their baseline values. The selected parameters cover different aspects of the proposed model, including government preference, household residential behavior, and school quality evolution. Specifically, we investigate the impact of the inequality-aversion parameter $\epsilon$, the school-quality preference parameter $\alpha$, the housing price sensitivity parameter $\beta$, the residential choice randomness parameter $\tau$, and the peer-effect weight $w_s$. The sensitivity results demonstrate how these parameters influence the trade-off between educational efficiency, equity, and socioeconomic sorting.

\subsubsection{Sensitive Analysis.}

\paragraph{$\epsilon$ Analysis.}
The parameter $\epsilon$ controls the degree of inequality aversion in the
Atkinson welfare function. Table~\ref{tab:epsilon} reports
the sensitivity of the equilibrium outcomes with respect to $\epsilon$.

As $\epsilon$ increases, educational inequality decreases significantly. GA
decreases from 0.0512 at $\epsilon=0$ to 0.0261 at $\epsilon=2$, while IA moves
closer to zero, indicating improved equity across income groups. Meanwhile, IQ
decreases from -0.5183 to values close to zero, suggesting that educational
opportunities become less dependent on household socioeconomic status.

However, excessively large values of $\epsilon$ provide limited additional
equity benefits. When $\epsilon$ increases from 2 to 4, GA slightly increases
to 0.0308 and IA becomes positive, indicating potential over-correction. In
contrast, MA remains stable around 0.51 across different settings, while HC
exhibits only minor fluctuations. These results suggest that moderate
inequality aversion ($\epsilon\approx1$--$2$) achieves a favorable balance
between educational equity and overall performance.
\begin{table}[H]
\centering
%\small
\setlength{\tabcolsep}{3pt}
\begin{tabular}{lccccc}
\toprule
Metric & 0 & 0.5 & 1 & 2 & 4 \\
\midrule
MA         & 0.5088 & 0.5109 & 0.5096 & 0.5101 & 0.5092 \\
GA           & 0.0512 & 0.0364 & 0.0320 & 0.0261 & 0.0308 \\
IA     & -0.0181 & -0.0096 & 0.0007 & -0.0018 & 0.0032 \\
ID & 0.1216 & 0.1229 & 0.1366 & 0.1253 & 0.1400 \\
IQ  & -0.5183 & -0.3775 & 0.0024 & -0.0606 & 0.0976 \\
HC         & 94.4947 & 96.1598 & 94.7289 & 94.6490 & 95.4126 \\
\bottomrule
\end{tabular}
\caption{Sensitivity analysis with respect to $\epsilon$.}
\label{tab:epsilon}
\end{table}

\paragraph{$\alpha$ Analysis.}

The parameter $\alpha$ controls the sensitivity of households to school quality
differences in the residential choice process. Table~\ref{tab:alpha} reports the sensitivity
of equilibrium outcomes with respect to different values of $\alpha$.

The results show that MA remains relatively stable across different values of
$\alpha$ (0.5083--0.5137), indicating that school-quality preference has
limited impact on overall accessibility. However, it affects distributional
outcomes. GA decreases from 0.0381 at $\alpha=0.25$ to 0.0213 at
$\alpha=0.5$, but slightly increases for larger values, suggesting that
moderate quality preference improves equity, while excessive quality-driven
sorting may increase disparities.

IQ exhibits a non-monotonic pattern, decreasing from 0.4640 to negative values
at intermediate $\alpha$ and increasing to 0.3021 at $\alpha=4$. Meanwhile, HC
remains relatively stable (94.2901--96.2011), suggesting that $\alpha$ mainly
affects the distribution of educational opportunities rather than aggregate
educational production.

Overall, the sensitivity analysis shows that moderate school-quality
preference achieves a better balance between educational matching and equity,
whereas excessive quality-driven sorting may strengthen socioeconomic
differences in school access.
\begin{table}[H]
\centering
%\small
\setlength{\tabcolsep}{3pt}
\begin{tabular}{lccccc}
\toprule
Metric & 0.25 & 0.5 & 1 & 2 & 4 \\
\midrule
MA           & 0.5083 & 0.5102 & 0.5101 & 0.5137 & 0.5109 \\
GA         & 0.0381 & 0.0213 & 0.0261 & 0.0313 & 0.0260 \\
IA    & 0.0132 & -0.0006 & -0.0018 & -0.0037 & 0.0069 \\
ID   & 0.1279 & 0.1244 & 0.1253 & 0.1359 & 0.1452 \\
IQ  & 0.4640 & 0.0015 & -0.0606 & -0.1654 & 0.3021 \\
HC         & 94.9945 & 96.2011 & 94.6490 & 94.2901 & 95.2680 \\
\bottomrule
\end{tabular}
\caption{Sensitivity analysis with respect to $\alpha$.}
\label{tab:alpha}
\end{table}

\paragraph{ $\beta$ Analysis.}

The parameter $\beta$ captures the sensitivity of households to housing prices
in the residential sorting process. A larger $\beta$ indicates that housing
costs play a more important role in household location decisions, potentially
strengthening the interaction between housing markets and educational access.
Table~\ref{tab:beta} reports the sensitivity of equilibrium outcomes with
respect to different values of $\beta$.

When $\beta\leq1$, the equilibrium outcomes remain nearly unchanged. MA, GA,
IA, and HC are identical across $\beta=0.25$, $0.5$, and $1$, indicating that
moderate variations in housing price sensitivity have limited effects on
educational outcomes.

As $\beta$ increases beyond one, housing price sensitivity begins to affect
educational distribution. GA increases from 0.0261 at $\beta=1$ to 0.0377 at
$\beta=4$, suggesting that stronger housing price constraints amplify
inequality in educational access. Meanwhile, IA remains close to zero,
indicating limited changes in income-based access differences.

The impact of $\beta$ on efficiency-related outcomes is limited. MA decreases
only slightly from 0.5101 to 0.5093, and HC changes from 94.6490 to 94.5725.
These results suggest that housing price sensitivity mainly affects the
distribution of educational resources rather than aggregate performance.

Overall, the sensitivity analysis indicates that moderate housing price
sensitivity has limited influence on the equilibrium outcomes, whereas
excessive dependence on housing costs increases educational access inequality.
This highlights the role of housing markets as a potential channel through
which socioeconomic differences influence educational opportunities.
\begin{table}[H]
\centering
%\small
\setlength{\tabcolsep}{3pt}
\begin{tabular}{lccccc}
\toprule
Metric & 0.25 & 0.5 & 1 & 2 & 4 \\
\midrule
MA           & 0.5101 & 0.5101 & 0.5101 & 0.5096 & 0.5093 \\
GA          & 0.0261 & 0.0261 & 0.0261 & 0.0320 & 0.0377 \\
IA     & -0.0018 & -0.0018 & -0.0018 & 0.0016 & -0.0004 \\
ID  & 0.1253 & 0.1253 & 0.1253 & 0.1345 & 0.1170 \\
IQ  & -0.0606 & -0.0606 & -0.0606 & -0.0007 & -0.1015 \\
HC        & 94.6490 & 94.6490 & 94.6490 & 94.5882 & 94.5725 \\
\bottomrule
\end{tabular}
\caption{Sensitivity analysis with respect to $\beta$.}
\label{tab:beta}
\end{table}

\paragraph{ $\tau$ Analysis.}

The parameter $\tau$ controls the degree of randomness in household school-choice behavior. 
deterministic utility differences and introduces more stochasticity into the
residential sorting process. Table~\ref{tab:tau} reports the sensitivity of
equilibrium outcomes with respect to different values of $\tau$.

As $\tau$ increases, MA decreases gradually from 0.5101 to 0.4857, indicating
that stronger choice randomness reduces average accessibility. In contrast, HC
increases from 94.6490 to 104.4688, suggesting that less deterministic sorting
improves the utilization of educational resources.

The impact of $\tau$ on socioeconomic sorting is also significant. DI decreases
from 0.1253 to 0.0637, while IQ moves toward zero, indicating weaker
associations between income and school quality. However, GA does not decrease
monotonically, reaching its lowest value at $\tau=1$ and increasing slightly for
larger values.

Overall, the sensitivity analysis reveals an efficiency--equity trade-off
associated with $\tau$. Higher choice randomness reduces average educational
access but improves aggregate human capital and weakens socioeconomic sorting.
Moderate randomness ($\tau\approx1$--$2$) achieves a better balance between
educational efficiency and distributional outcomes.
\begin{table}[H]
\centering
%\small
\setlength{\tabcolsep}{3pt}
\begin{tabular}{lccccc}
\toprule
Metric & 0.25 & 0.5 & 1 & 2 & 4 \\
\midrule
MA           & 0.5101 & 0.5018 & 0.4959 & 0.4896 & 0.4857 \\
GA     & 0.0261 & 0.0267 & 0.0121 & 0.0432 & 0.0410 \\
IA    & -0.0018 & -0.0006 & -0.0004 & 0.0009 & 0.0002 \\
ID  & 0.1253 & 0.0842 & 0.0738 & 0.0624 & 0.0637 \\
IQ  & -0.0606 & -0.0958 & -0.0750 & -0.0504 & 0.0348 \\
HC        & 94.6490 & 97.6483 & 100.8233 & 101.5455 & 104.4688 \\
\bottomrule
\end{tabular}
\caption{Sensitivity analysis with respect to $\tau$.}
\label{tab:tau}
\end{table}

\paragraph{$(w_s,w_v)$ Analysis.}
The parameters $(w_s,w_v)$ determine the relative importance of student composition and government investment in the evolution of school quality. Specifically, $w_s$ captures the contribution of peer effects, while $w_v$ represents the effectiveness of financial investment in improving school quality. Table~\ref{tab:weight} reports the sensitivity of equilibrium outcomes under different combinations of these two weights.

The results show that increasing $w_s$ generally improves equity. As $w_s$
increases from 0 to 1, GA decreases from 0.0726 to 0.0212, indicating that
stronger peer effects reduce inequality in school-quality access.

In contrast, increasing $w_v$ mainly improves efficiency-related outcomes.
When the weight of government investment increases, HC rises from 94.1940 under
$(w_s,w_v)=(1,0)$ to 95.1909 under $(w_s,w_v)=(0,1)$. This occurs because a
larger $w_v$ strengthens the impact of resource investment on school quality.
However, prioritizing investment effectiveness may weaken equity improvements,
as reflected by higher GA.

Overall, the weighting parameters reveal a trade-off between efficiency and equity. Larger $w_v$ favors HC accumulation, whereas larger $w_s$ promotes more
equitable access through lower GA.

\begin{table}[H]
\centering
%\small
\setlength{\tabcolsep}{3pt}
\begin{tabular}{lccccc}
\toprule
Metric & (0,1) & (0.25,0.75) & (0.5,0.5) & (0.75,0.25) & (1,0) \\
\midrule
MA           & 0.5098 & 0.5108 & 0.5063 & 0.5016 & 0.4999 \\
GA           & 0.0726 & 0.0389 & 0.0273 & 0.0294 & 0.0212 \\
IA    & 0.0081 & -0.0104 & -0.0022 & 0.0075 & 0.0021 \\
ID   & 0.1352 & 0.1088 & 0.1302 & 0.1284 & 0.1143 \\
IQ & 0.1295 & -0.3919 & -0.1130 & 0.3483 & 0.0742 \\
HC    & 95.1909 & 93.1327 & 93.8852 & 94.2137 & 94.1940 \\
\bottomrule
\end{tabular}
\caption{Sensitivity analysis with respect to the weighting coefficients $(w_s, w_v)$.}
\label{tab:weight}
\end{table}

\subsection{Scalability Experiments} \label{app:scalability}
To evaluate the scalability of the proposed framework under different spatial
and educational configurations, we conduct experiments with varying numbers of
communities and schools. Specifically, we consider six configurations by
changing the number of communities from 12 to 24 and the number of schools from
4 to 12. Table~\ref{tab:config_results} reports the resulting performance under
different environment scales.

The results demonstrate that the number of communities has a more significant
impact on MA than the number of schools. When the number of communities
increases from 12 to 24, MA decreases substantially from approximately 0.477 to
0.368, while changing the number of schools within the same community scale has
only a limited effect (e.g., 0.477, 0.477, and 0.471 under 4, 8, and 12 schools
with 12 communities). This indicates that spatial expansion and household
dispersion introduce greater challenges for maintaining educational
accessibility than increasing the number of educational providers.

In contrast, increasing the number of schools mainly affects the distributional
properties of educational opportunities. As the number of schools increases,
GA tends to increase. For example, under 12 communities, GA rises from 0.030
with 4 schools to 0.084 with 12 schools. Similarly, DI increases from 0.145 to
0.258, indicating stronger socioeconomic segregation across schools. These
results suggest that simply increasing the number of schools does not
necessarily improve educational equity. A larger number of schools may provide
more differentiated educational environments, but can also intensify household
sorting and amplify disparities in school access.

The scalability experiments further reveal that the relationship between school
quantity and educational outcomes is not monotonic. While increasing the number
of schools may improve educational diversity, excessive fragmentation of
educational resources can strengthen sorting mechanisms and increase
inequality. Therefore, expanding school quantity alone is insufficient to
guarantee improved educational outcomes, and effective resource allocation
requires jointly considering spatial structure, household distribution, and
school configuration.

\begin{table}[H]
\centering
%\scriptsize
\setlength{\tabcolsep}{3pt}
\begin{tabular}{lcccccc}
\toprule
 & (12,4) & (12,8) & (12,12) & (24,4) & (24,8) & (24,12) \\
\midrule
MA
& 0.477 & 0.477 & 0.471 & 0.368 & 0.367 & 0.367 \\
GA
& 0.030 & 0.032 & 0.084 & 0.015 & 0.037 & 0.047 \\
IA
& 0.007 & -0.007 & -0.026 & 0.002 & -0.0003 & 0.007 \\
DI
& 0.145 & 0.195 & 0.258 & 0.085 & 0.160 & 0.226 \\
IQ
& 0.289 & -0.288 & -0.452 & 0.206 & -0.030 & 0.228 \\ 
HC
& 106.366 & 105.801 & 105.613 & 162.425 & 163.465 & 161.777 \\
\bottomrule
\end{tabular}
\caption{Performance under different community and school configurations.}
\label{tab:config_results}
\end{table}

\subsection{Comparative Experiments} \label{app:experiment.comparison}

In the comparative experiments, we evaluate the proposed reinforcement learning (RL)-based resource allocation policy against several representative baseline strategies, including \emph{equal-split budgeting}, \emph{proportional-to-enrollment allocation}, and \emph{compensatory funding}. The allocation rule for each baseline is defined as follows:

\begin{itemize}
    \item \textbf{Equal-split.} The total education budget is equally distributed
    across all schools,
   \begin{equation}
    b_k^t=\frac{1}{K}, \quad k=1,\dots,K.
    \end{equation}

    \item \textbf{Proportional-to-enrollment.} The budget is allocated according
    to school enrollment,
    \begin{equation}
    b_k^t=\frac{N_k^t}{\sum_{j=1}^{K}N_j^t},
    \end{equation}
    corresponding to equal per-student funding and representing a common
    allocation rule in public education systems.

    \item \textbf{Compensatory funding.} Schools with lower educational quality
    receive a larger budget share,
    \begin{equation}
    b_k^t=
    \frac{\exp(-\lambda r_k^t)}
    {\sum_{j=1}^{K}\exp(-\lambda r_j^t)},
    \end{equation}
    where $\lambda$ controls the compensation intensity and $r_k^t$ denotes
    the current quality of school $k$. This heuristic represents an
    equity-oriented funding strategy that prioritizes disadvantaged schools.
\end{itemize}
\subsection{Parameter Setting}
\label{app:parameter-setting}

\begin{table}[H]
  \centering
  %\small
  \begin{tabular}{@{}llcl@{}}
    \toprule
    Component & Parameter & Baseline & Interpretation \\
    \midrule
    Geography
      & \(M,K\) & \(12,4\) & Communities and schools \\
      & \(C_m\) & \(100\) & Housing capacity per community \\
      & \(Q_m\) & \(\mathcal U[0,1]\) & Initial community amenity \\
      & \(r_k^0\) & \(\mathcal U[0.1,0.9]\) & Initial school quality \\
    \addlinespace
    Household choice
      & \(\alpha\) & \(1.0\) & Baseline preference for school quality \\
      & \(\xi\) & \(1.0\) & Parental-education preference gradient \\
      & \(\beta_0\) & \(1.0\) & Price-sensitivity scale \\
      & \(D,\lambda\) & \(1.0,1.0\) & Relocation cost and its weight \\
      & \(\eta\) & \(1.0\) & Amenity weight \\
      & \(\tau\) & \(0.25\) & Logit temperature \\
      & \(P_{\min}\) & \(0\) & Housing-price floor \\
    \addlinespace
    School dynamics
      & \(\delta\) & \(0.4\) & Quality adjustment rate \\
      & \(w_v,w_s\) & \(0.8,0.2\) & Investment and peer weights \\
      & \(g(v)\) & \(\sqrt{v}\) & Concave investment technology \\
      & \(\psi\) & \(1.0\) & Ability exponent in learning output \\
    \addlinespace
    Government
      & \(B\) & \(100\) & Annual education budget \\
      & \(\epsilon\) & \(2.0\) & Atkinson inequality aversion \\
      & \(\kappa\) & \(0\) & Policy-stability penalty \\
      & \(\gamma\) & \(0.95\) & Intertemporal discount factor \\
    \addlinespace
    Population
      & \(\pi\) & \(0.2\) & Annual departure probability \\
      & \(a_{\mathrm{entry}},W\) & \(6,6\) & Entry age and schooling duration \\
      & \(\chi_{\mathrm{in}}\) & \(1.0\) & Inflow-to-departure ratio \\
      & \(\omega_0\) & \(0.95\) & Initial occupancy rate \\
    \addlinespace
    Simulation
      & \(T,B_{\mathrm{burn}}\) & \(50,10\) & Horizon and burn-in periods \\
      & \(S\) & \(100\) & Number of evaluation seeds \\
    \bottomrule
  \end{tabular}
  \caption{Baseline model and simulation parameters.}
  \label{tab:baseline-parameters}
\end{table}

% The policy is trained with proximal policy optimization (PPO) for
% \(200{,}000\) environment steps.  The policy and value networks use the
% default multilayer-perceptron architecture of Stable-Baselines3.
% Table~\ref{tab:ppo-parameters} lists the optimization hyperparameters.  The
% discount factor is inherited from the government objective rather than
% specified independently in the PPO configuration.  At the beginning of each
% training episode, the simulator draws a reproducible new population and
% market realization, while the configured geography, amenities, and initial
% school-quality support remain fixed.  This exposes the policy to multiple
% initial realizations and reduces dependence on a single training trajectory.

\begin{table}[H]
  \centering
  %\small
  \begin{tabular}{@{}lcl@{}}
    \toprule
    Hyperparameter & Value & Description \\
    \midrule
    Learning rate & \(3\times10^{-3}\) & Adam step size \\
    Rollout length & \(100\) & Environment steps per update \\
    Batch size & \(20\) & Minibatch size \\
    Training epochs & \(10\) & Passes over each rollout \\
    GAE coefficient & \(0.95\) & Bias--variance trade-off \\
    PPO clip range & \(0.2\) & Policy-update clipping threshold \\
    Entropy coefficient & \(0\) & Entropy regularization weight \\
    Discount factor & \(0.95\) & Inherited from the government objective \\
    Training steps & \(200{,}000\) & Total environment interactions \\
    \bottomrule
  \end{tabular}
  \caption{PPO training hyperparameters.}
  \label{tab:ppo-parameters}
\end{table}

% TODO: Add the final network architecture, PPO hyperparameters, training budget,
% and realized-assignment implementation details.
% \appendix
% - detailed parameter definitions in Agent models, including household utility model, government model, school model
% - details of demographic change module.
% - explanation and proof of proposition 1
% - emphasize price is the sufficient statistics in sorting equilibrium
% - alg to compute sorting equilibrium
% - alg of government rl training

\end{document}